\tikzset{->-/.style={decoration={
  markings,
  mark=at position #1 with {\arrow{>}}},postaction={decorate}}}
  \tikzset{-<-/.style={decoration={
  markings,
  mark=at position #1 with {\arrowreversed[red]{latex'}}},postaction={decorate}}}
\newcommand{\Reals}{\mathbb{R}}
\newcommand{\half}{\tfrac{1}{2}}
\newcommand{\cM}{\mathcal{M}}
\newcommand{\Domain}{\Omega}
\newcommand{\Rod}{\mathcal{R}}
\newcommand{\Nut}{n}
\newcommand{\xiInt}{\mathcal{I}}
\theoremstyle{plain}
\newtheorem{thm}{Theorem}[section]
\newtheorem{cor}[thm]{Corollary}
\newtheorem{lemma}[thm]{Lemma}
\newtheorem{definition}[thm]{Definition}
\newtheorem{prop}[thm]{Proposition}
\newtheorem{conj}{Conjecture} 
\newtheorem{question}[thm]{Question}
\newtheorem{remark}[thm]{Remark}
\renewcommand{\d}{\mathrm{d}}
\DeclareMathOperator{\SL}{\rm SL}
\DeclareMathOperator{\CP}{\mathbb{C}\mathbb{P}}
\newcommand{\ba}{\mathbf{a}}
\newcommand{\SU}{\text{SU}}
\title{Gravitational Instantons and special geometry} 
\author[S. Aksteiner]{Steffen Aksteiner}
\email{steffen.aksteiner@aei.mpg.de}
\address{Albert Einstein Institute, Am M\"uhlenberg 1, D-14476 Potsdam, Germany }
\author[L. Andersson]{Lars Andersson}
\email{laan@aei.mpg.de}
\address{Albert Einstein Institute, Am M\"uhlenberg 1, D-14476 Potsdam, Germany }
\numberwithin{equation}{section}
\begin{document}

\begin{abstract}
The Chen-Teo gravitational instanton is an asymptotically flat, toric, Ricci flat family of metrics on $\CP^2\setminus S^1$, that provides a counterexample to the classical Euclidean Black Hole Uniqueness conjecture. In this paper we show that the Chen-Teo instanton is Hermitian and non-Kähler. It follows that all known examples of gravitational instantons are Hermitian. 
\end{abstract} 

\maketitle

\section{Introduction}
A gravitational instanton
is a complete, non-compact, Riemannian four-manifold with at least quadratic curvature decay and vanishing Ricci tensor. A set of uniqueness conjectures for gravitational instantons were formulated in the 1970's \cite{gibbons:1978:survey}, and much work has since then been devoted to the problems of constructing and classifying instantons. In addition to the asymptotically locally Euclidean (ALE) and asymptotically locally flat (ALF) cases, with quartic and cubic volume growth, respectively, examples with quadratic and slower volume growth have been found, and a classification has been proved for hyperk\"ahler instantons, cf. \cite{chen2015gravitational,chen2019gravitational,chen2021gravitational,2021arXiv210812991S} and references therein.   For an ALF instanton, the boundary at infinity is a circle bundle over $S^2$, and the case when this bundle is trivial is called asymptotically flat (AF). 

The 2-parameter family Euclidean Kerr instantons on $S^4 \setminus S^1$ \cite[\S II.B]{1998PhRvD..59b4009H} is the most well known example of an AF instanton, and  
according to the classical form of the Euclidean Black Hole Uniqueness Conjecture  \cite[Conjecture 2]{gibbons:1978:survey}, a non-flat AF gravitational instanton is in the Kerr family. However, the remarkable instanton found 
by Chen and Teo, and discussed in detail in Sections  \ref{sec:CTmetric} and \ref{sec:CTInst} below, shows that the conjecture is not valid without additional assumptions. 
\begin{thm}[Chen and Teo \cite{2011PhLB..703..359C}]\label{thm:CT}
Let $\cM = \CP^2 \setminus S^1$. There is a 2-parameter family of AF instantons $(\cM, g_{ab})$ that admit an effective isometry action by the 2-torus, with three isolated fixed points.   
\end{thm}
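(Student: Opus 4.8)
The plan is to exhibit $g_{ab}$ explicitly in Weyl--Papapetrou form adapted to the torus action and then verify each asserted property in turn. First I would introduce coordinates $(\rho,z)$ on the orbit space $\{\rho \ge 0\}$, where the two commuting Killing fields $\xi_1,\xi_2$ generating the $T^2$ action span the orbits and $\rho$ is, up to sign, the square root of the determinant of the Gram matrix of $\xi_1,\xi_2$ --- a function that is harmonic away from the axis $\rho=0$. In these coordinates the Ricci-flat equations reduce to a harmonic-map (sigma-model) system for the Killing Gram data into a noncompact symmetric space ($\SL(3,\Reals)/\SO(3)$ in the Euclidean biaxisymmetric setting), together with a line-integral quadrature for the remaining conformal factor. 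Rather than integrating this system from scratch, I would take the Chen--Teo ansatz, in which the relevant potentials are ratios of low-degree polynomials in prolate-spheroidal-type coordinates, and check directly that it solves the reduced equations; this is routine once the ansatz is in hand.

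Next I would extract the rod structure from the boundary $\rho=0$. The axis decomposes into finitely many rods, on each of which a definite primitive integer combination $a\,\xi_1+b\,\xi_2$ of the Killing fields vanishes. The collection of rod vectors, together with the turning points between adjacent rods, encodes both the topology of $\cM$ and its fixed-point data: the three isolated fixed points arise as the corners where two rods with distinct degenerating directions meet, and the requirement that consecutive rod vectors span the integer lattice (determinant $\pm1$) is exactly what makes each corner a smooth interior point rather than an orbifold singularity. I would then match the resulting rod diagram --- reading off how the degenerate circle fibers over the rods are glued --- against the standard toric presentation of $\CP^2$, confirming that the manifold is $\CP^2\setminus S^1$.

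I would complete the local picture with the asymptotic analysis, expanding $g_{ab}$ as $\rho^2+z^2\to\infty$ to confirm that the geometry approaches a flat cone whose boundary at infinity is a \emph{trivial} circle bundle over $S^2$ (the AF condition), with cubic volume growth and quadratic curvature decay, so that $(\cM,g_{ab})$ is a genuine gravitational instanton in the sense of the introduction.

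The main obstacle, and the point at which the two-parameter count is forced, is \emph{global regularity}. The raw polynomial ansatz carries more free constants than two; demanding simultaneously (i) consistent periodicities of the torus coordinates on every rod so that no conical deficit appears along any axis, (ii) smoothness at each of the three fixed points, and (iii) the absence of interior curvature singularities and degenerate orbits, imposes a coupled system of algebraic balance conditions. Generically the periodicity conditions on different rods are mutually incompatible, and it is only on a distinguished subvariety of parameter space that they can all hold. I expect the heart of the argument to be showing that this subvariety is nonempty and exactly two-dimensional --- precisely the content of the Chen--Teo balancing computation.
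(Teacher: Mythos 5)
Your outline follows the same route as the paper (explicit Weyl--Papapetrou/C-metric form, rod structure, quantization conditions at the rods and corners, asymptotic expansion), but it has a genuine gap at the step you fold into a ``routine'' balancing computation: nothing in your plan verifies that the metric is positive definite and nonsingular in the \emph{interior} of the orbit space. In C-metric coordinates the Chen--Teo metric \eqref{eq:ChenTeo5paramMetric} degenerates wherever the function $H$ vanishes, and the sign of $H$ on the coordinate rectangle $\Domain$ is not manifest from the ansatz, nor is it implied by the rod/periodicity conditions you list. This is in fact the hardest part of the paper's argument: Lemma \ref{lem:Hg0} proves $H>0$ on $\bar\Domain$ by combining the transport identity \eqref{eq:Hflow} (monotonicity of $e^{-\hat A}H$ along the flow of $x\partial_x-y\partial_y$, which reduces the problem to positivity on two of the rods) with concavity/convexity arguments for $H$ restricted to those rods and an explicit sign analysis of a degree-eight polynomial in the parameter $\xi$. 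Your condition (iii), ``absence of interior curvature singularities and degenerate orbits,'' is therefore an unverified hypothesis in your proposal rather than a consequence of the conditions you impose.

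Two smaller points. First, the two-parameter count is not obtained by a dimension-of-subvariety argument: the AF rod conditions \eqref{eq:AFRegCond} are solved in closed form, expressing $x_1,x_3,x_4$ in terms of $x_2$ and a parameter $\xi$ as in \eqref{eq:nuxxi}, and the admissible window $\xi\in(1/2,1/\sqrt{2})$ is exactly what the required root ordering $x_1<x_2<x_3<0=x_4$ forces; the two surviving parameters are the overall scale $k$ and $\xi$. You would need to carry out this elimination explicitly to know that the regular locus is nonempty and two-dimensional. Second, for a four-dimensional vacuum metric with two commuting Killing fields the reduced harmonic-map target is $\SL(2,\Reals)/\SO(2)$ (the Ernst system); $\SL(3,\Reals)/\SO(3)$ is the target for the five-dimensional biaxisymmetric problem, so that part of your setup should be corrected even though it does not affect the rest of the plan.
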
 

Let $(\cM, g_{ab})$ be an oriented Riemannian four-manifold. The Weyl tensor decomposes into self dual and anti-self dual parts that define self-adjoint endomorphisms, with vanishing trace, on the 3-dimensional spaces of (anti)-self dual 2-forms. Using terminology originating in the Petrov classification of algebraic types  \cite{karlhede:1986,PR:II}, the cases with exactly three, two, or one distinct eigenvalues are called type $I, D$ or $O$, respectively. Thus, for example the Euclidean Kerr instanton is of type $D^+ D^-$, where we use $+,-$ to indicate self dual and anti-self dual parts, respectively. A Weyl tensor of type $I$ is also called algebraically general, while types $D$ and $O$ are called algebraically special. We are now ready to state our main theorem. 
\begin{thm} \label{thm:main-intro} 
The 2-parameter Chen-Teo family of  AF gravitational instantons 
is of type $D^+ I^-$.
\end{thm}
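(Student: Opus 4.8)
The plan is to treat the self-dual and anti-self-dual halves of the Weyl tensor separately, establishing the degeneracy $D^+$ by a structural argument and the genericity $I^-$ by an explicit curvature computation. Throughout I work in a complex null tetrad adapted to the two commuting Killing fields of the torus action, and I organise the spin connection and curvature using the GHP operators $\tho,\edt$, so that the ten Weyl scalars split into a self-dual quintuple $\Psi_0^+,\dots,\Psi_4^+$ and an anti-self-dual quintuple $\Psi_0^-,\dots,\Psi_4^-$, each encoding the corresponding symmetric trace-free endomorphism $W^\pm$ of $\Lambda^\pm$.

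For the self-dual part, I would fix the orientation for which the Chen--Teo metric is Hermitian, with compatible complex structure $J$ and fundamental two-form $\omega\in\Lambda^+$ of constant length. By the standard structure theory of Hermitian four-manifolds (Derdzinski; the Riemannian Goldberg--Sachs theorem of Apostolov--Gauduchon), the self-dual Weyl endomorphism of a Hermitian metric has $\omega$ as an eigenvector and restricts to a multiple of the identity on $\omega^\perp$; hence $W^+$ has eigenvalues of the form $(2\mu,-\mu,-\mu)$ and is of type $D^+$ or $O^+$. Equivalently, the self-dual speciality discriminant $\Iinv_+^3-27\,\Jinv_+^2$ vanishes identically, where $\Iinv_+=\Psi_0^+\Psi_4^+-4\Psi_1^+\Psi_3^++3(\Psi_2^+)^2$ and $\Jinv_+$ is the determinant of the associated symmetric $3\times3$ matrix of Weyl scalars. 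To exclude $O^+$ it suffices to note that $W^+\equiv 0$ together with Ricci-flatness would make the metric (anti-)self-dual and hence hyperk\"ahler, contradicting the established non-K\"ahler property; therefore $\mu\not\equiv 0$ and $W^+$ is of type $D^+$ on an open dense set.

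For the anti-self-dual part, the type is algebraically general precisely when the corresponding discriminant is nonzero, so the goal is to show that $\Iinv_-^3-27\,\Jinv_-^2\not\equiv0$, where $\Iinv_-$ and $\Jinv_-$ are built from $\Psi_0^-,\dots,\Psi_4^-$ exactly as above. Having computed the anti-self-dual Weyl scalars from the adapted tetrad, I would assemble this discriminant as a rational function of the two parameters and the coordinates and verify that it does not vanish identically---most efficiently by evaluating it at a single conveniently chosen point of $\cM$ for generic parameter values. Nonvanishing at one point forces three distinct eigenvalues of $W^-$ on an open dense set, giving type $I^-$ and completing the proof.

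The main obstacle is the anti-self-dual computation. The Chen--Teo metric is algebraically involved, so producing the scalars $\Psi_i^-$ and then the discriminant $\Iinv_-^3-27\,\Jinv_-^2$ in usable closed form requires a carefully symmetry-adapted tetrad and, realistically, computer-assisted symbolic manipulation; the subsequent nonvanishing check and the degeneracy of $W^+$ are by comparison routine. A secondary bookkeeping point is to confirm that the Hermitian orientation is the one labelled $+$, so that the degeneracy indeed lands on $W^+$ and the genericity on $W^-$, matching the asserted type $D^+I^-$.
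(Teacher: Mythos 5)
Your treatment of the anti-self-dual side is exactly the paper's: compute the scalars $\tilde\Psi_i$ in a symmetry-adapted tetrad, form the speciality discriminant, and check nonvanishing at a single conveniently chosen point (the paper uses the center of the coordinate domain), which gives type $I^-$ on an open dense set. That half is sound, modulo the expected computer algebra.

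The self-dual half, however, is circular as written. You propose to derive the degeneracy of $W^+$ from the Hermitian structure of the Chen--Teo metric via Derdzinski/Apostolov--Gauduchon, but the Hermitian property of this metric is not known a priori --- it is precisely the main \emph{conclusion} of the paper, obtained as a consequence of the one-sided type $D$ property through the Przanowski--Baka/Derdzinski implication (Ricci-flat $+$ one-sided type $D$ $\Rightarrow$ Hermitian and conformally K\"ahler). There is no independent complex structure to ``fix the orientation'' by; in the paper the integrable $J$ is only constructed \emph{after} the degeneracy is established, from the adapted tetrad in which $\hat\Psi_0=\hat\Psi_1=0$. The same circularity affects your exclusion of $O^+$, which appeals to ``the established non-K\"ahler property'' --- again a corollary of the theorem, not an input. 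What is actually needed, and what the paper does, is a direct computation: in an explicit Killing tetrad one finds
\begin{align*}
\Psi_2=\frac{(1+\nu)(x-y)^3}{4k(\nu x+y)^3},\qquad \Psi_0=\frac{3f}{\bar f}\,\Psi_2,
\end{align*}
so that $\Psi_0\Psi_4=9\Psi_2^2$ and the factorization $I^3-6J^2=512\,\Psi_0\Psi_4(\Psi_0\Psi_4-9\Psi_2^2)^2$ valid in a Killing tetrad forces algebraic speciality of $W^+$, while the explicit formula for $\Psi_2$ (nonzero since $-1<\nu<-1/2$ on the instanton parameter range) rules out $O^+$. Without some such explicit input on the self-dual curvature, your argument has no starting point.
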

The proof is given in Section \ref{sec:ProofMainThm}. The theorem shows that the Chen-Teo instanton is one-sided type $D$ in the sense of \cite{1984GReGr..16..797P,2020arXiv200303234T}, which has several important consequences. If $(\cM, g_{ab})$ is Ricci flat and one-sided type $D$, then it is Hermitian and conformally Kähler \cite{MR707181, 1984GReGr..16..797P,MR757212} and, since a Ricci flat K\"ahler four-manifold is half flat, Theorem~\ref{thm:main-intro} implies that the Chen-Teo instanton is non-Kähler.
In addition, the metric can be represented in terms of a potential solving the $\SU(\infty)$ Toda equation, cf. \cite{1984GReGr..16..797P,2020arXiv200303234T} and references therein. 
The Chen-Teo family of gravitational instantons is contained in the 5-parameter Chen-Teo family of Ricci flat metrics \cite{2015PhRvD..91l4005C}, which in general are not smooth. We prove in this paper that  for generic values of the parameters, also the Chen-Teo 5-parameter family  is $D^+ I^-$, cf.  Theorem \ref{thm:AlgNonSpec}. Moreover, the Weyl tensor is of type $D^+ D^-$ at points where at least one of the Killing fields vanishes.  

As mentioned above, the Euclidean Kerr instanton is the Wick-rotated version of the Lorentz\-ian Kerr black hole metric. In other words, the complexification of the Euclidean Kerr instanton contains a regular Lorentzian real section. 
However, a Ricci flat Riemannian four-manifold has this property only if the self dual and anti-self dual parts of the Weyl tensor have the same algebraic type, cf. \cite{1977RpMP...11..197R}, see also \cite{1977IJTP...16..663W}. This yields the following corollary to Theorem \ref{thm:main-intro} that answers a question raised in \cite{2011PhLB..703..359C,2015PhRvD..91l4005C}, see also \cite[p. 9]{2018JHEP...02..008B} for an argument based on Ernst potentials.
\begin{cor}\label{cor:no-lor-intro} 
Let $(\cM,g_{ab})$ be in the Chen-Teo family of gravitational instantons. The complexification of $(\cM, g_{ab})$ does not contain a  real Lorentzian section. 
\end{cor}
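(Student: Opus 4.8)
The plan is to deduce Corollary~\ref{cor:no-lor-intro} directly from Theorem~\ref{thm:main-intro} together with the structural result cited in the paper, namely \cite{1977RpMP...11..197R}, which asserts that a Ricci flat Riemannian four-manifold admits a regular real Lorentzian slice in its complexification only if the self dual and anti-self dual Weyl parts $W^+$ and $W^-$ share the same Petrov type. Thus the entire argument is a contrapositive: I would first recall this dichotomy as the governing principle, then simply exhibit that the Chen-Teo family violates the hypothesis.

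The first step is to invoke Theorem~\ref{thm:main-intro}, which states that the Chen-Teo instanton is of type $D^+I^-$. Concretely, this means the self dual Weyl tensor $W^+_{abcd}$ is algebraically special (type $D$, with a repeated eigenvalue structure), while the anti-self dual part $W^-_{abcd}$ is algebraically general (type $I$, three distinct eigenvalues). Since type $D$ and type $I$ are genuinely distinct algebraic types in the Petrov/Penrose classification, the two halves of the Weyl tensor do \emph{not} have the same algebraic type. I would emphasize that this is a pointwise-robust statement: the classification is stable, holding on a dense open set where the eigenvalues of $W^-$ remain distinct, so the mismatch is not an artifact of a measure-zero degeneration.

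The second step is to feed this mismatch into the criterion of \cite{1977RpMP...11..197R}. Because the existence of a real Lorentzian section forces $W^+$ and $W^-$ to be of the same type, and the Chen-Teo family realizes $W^+$ as type $D$ but $W^-$ as type $I$, the necessary condition fails. Therefore the complexification of $(\cM, g_{ab})$ cannot contain a regular real Lorentzian slice, which is exactly the assertion of the corollary. The alternative route through Ernst potentials \cite{2018JHEP...02..008B} may be mentioned for comparison, but the Weyl-type argument is cleaner and self-contained once Theorem~\ref{thm:main-intro} is in hand.

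I do not anticipate a serious obstacle in the logical chain itself, since it is a short contrapositive. The only subtlety worth flagging—and the place I would spend care—is matching conventions between the real Riemannian algebraic types ($I, D, O$ as defined in the excerpt via eigenvalues of the self-adjoint Weyl endomorphisms) and the types appearing in the complexified Lorentzian framework of \cite{1977RpMP...11..197R}. I would confirm that the notion of ``same algebraic type'' used there coincides with distinguishing type $D$ from type $I$ in the sense of our paper, so that the implication transfers without ambiguity. Once that dictionary is fixed, the corollary follows immediately.
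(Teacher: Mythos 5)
Your argument is correct and is precisely the paper's own reasoning: the authors derive the corollary from Theorem~\ref{thm:main-intro} together with the criterion of \cite{1977RpMP...11..197R} that a real Lorentzian section forces $W^+$ and $W^-$ to have the same algebraic type, which $D^+I^-$ violates. No gap to report.
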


In view of Theorem \ref{thm:main-intro}, all known examples of gravitational instantons are Hermitian. 
As mentioned above, there is a classification of hyperk\"ahler instantons. Further, there is a classification of compact Hermitian-Einstein four-manifolds \cite{MR2899877}. 
An important step in the proof of the just mentioned classification result is to show that the symmetry group of  compact Hermitian-Einstein four-manifolds contains a 2-torus. 
This holds for all known Hermitan, non-K\"ahler gravitational instantons, i.e. Euclidean Kerr, Chen-Teo, Taub-bolt, or Taub-NUT with the orientation opposite to the hyperk\"ahler orientation. Further, it has recently been proved by Biquard and Gauduchon \cite{Biquard:private} that a Hermitan, non-K\"ahler ALF instanton whose isometry group contains a 2-torus, must be on the just mentioned list of examples. This motivates the following conjecture.
\begin{conj} \label{conf:Herm} 
Let $(\cM, g_{ab})$ be a Hermitian, non-K\"ahler, ALF gravitational instanton. Then  $(\cM, g_{ab})$ is one of  Euclidean Kerr, Chen-Teo,  Taub-bolt, or Taub-NUT with the orientation opposite to the hyperk\"ahler orientation.
\end{conj}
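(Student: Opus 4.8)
The plan is to reduce Conjecture~\ref{conf:Herm} to the theorem of Biquard and Gauduchon \cite{Biquard:private}, whose conclusion is exactly the desired one but under the extra hypothesis that the isometry group contains a 2-torus. Thus the entire problem is to show that every Hermitian, non-K\"ahler, ALF gravitational instanton automatically admits an effective isometric $T^2$-action. This is the precise analogue, in the ALF setting, of the step in LeBrun's classification of compact Hermitian--Einstein four-manifolds \cite{MR2899877} in which the existence of the torus symmetry is established, and I expect it to carry essentially all of the difficulty.

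The first Killing field is produced from the complex geometry. Since $(\cM, g_{ab})$ is Ricci flat, Hermitian and non-K\"ahler, the results quoted above \cite{MR707181,1984GReGr..16..797P,MR757212} show that it is conformally K\"ahler: there is a compatible complex structure $J$ and a positive function $f$ such that $\tilde g_{ab} = f^2 g_{ab}$ is K\"ahler, and $W^+$ is of type $D$. By the standard theory of conformally Einstein K\"ahler four-manifolds, the Hamiltonian vector field $K_1$ of the scalar curvature $\tilde s$ of $\tilde g$ with respect to the K\"ahler form $\tilde\omega$ (equivalently $K_1 = J\,\nabla^{\tilde g}\tilde s$ up to normalization) is a holomorphic Killing field for $\tilde g$ and, by the relation between the two metrics in this construction, a Killing field for $g_{ab}$ as well. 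It is nontrivial because non-K\"ahlerity forces $\tilde s$, and hence $f$, to be nonconstant. This yields one isometric $\mathbb{R}$-action, whose orbits close to a circle once one controls the flow near the critical set of $\tilde s$ and uses that the identity component of the isometry group of a non-flat ALF instanton is compact.

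The second, commuting Killing field I would extract from the ALF asymptotics. The end of $\cM$ is a circle bundle over $S^2$, and the asymptotic (Taub--NUT type) model is invariant under the fibre $S^1$; the task is to promote this asymptotic symmetry to a genuine Killing field $K_2$ of $g_{ab}$ on all of $\cM$. I would approach this through unique continuation together with an existence argument for the Killing operator on the Ricci flat manifold: one constructs an approximately Killing field matching the fibre rotation at infinity, estimates its failure to be Killing against the at least quadratic curvature decay, and shows that the resulting linear obstruction vanishes, so that an exact Killing field exists in the same asymptotic class. One then verifies $[K_1, K_2]=0$ --- most naturally by arguing that both fields preserve $J$ and the asymptotic structure, hence lie in a common abelian subalgebra --- and that $K_1, K_2$ are pointwise independent on a dense set, so that their joint flow defines an effective $T^2$ inside $\mathrm{Isom}(\cM, g_{ab})$.

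The hard part will be this second step: passing from a symmetry that exists only asymptotically, or infinitesimally, to an exact global Killing field, and then closing the two commuting flows into an honest torus. It is here that the ALF hypothesis must be used in an essential way, rather than merely the finiteness of the curvature, and it is precisely the absence of an a priori torus symmetry that keeps the statement conjectural --- exactly as the torus-existence step is the crux of the compact classification. Once the effective $T^2$-action is established, Conjecture~\ref{conf:Herm} follows immediately by invoking \cite{Biquard:private}.
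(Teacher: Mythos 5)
The statement you are trying to prove is labelled a \emph{conjecture} in the paper, and the paper gives no proof of it; it only motivates it. The authors explicitly note that the Biquard--Gauduchon result \cite{Biquard:private} gives the desired classification \emph{under the additional hypothesis} that the isometry group contains a 2-torus, and that this hypothesis is verified in all known examples --- which is exactly why the unconditional statement remains conjectural. Your proposal correctly identifies the logical structure (reduce to \cite{Biquard:private} by producing an effective isometric $T^2$-action), and your first step --- extracting a Killing field from the conformally K\"ahler structure as the Hamiltonian vector field of the scalar curvature of the K\"ahler representative, using \cite{MR707181,MR757212} --- is a genuine and standard argument, modulo the (nontrivial in the noncompact setting) issue of closing its orbits into a circle.

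The gap is your second step, and it is not a technical gap but the entire open content of the conjecture. You propose to promote the asymptotic fibre rotation of the ALF end to an exact global Killing field by ``unique continuation together with an existence argument for the Killing operator,'' estimating the failure of an approximate Killing field to be Killing and showing ``the resulting linear obstruction vanishes.'' No such argument is known: the Killing operator on a Ricci-flat manifold has no general solvability theory that would let you correct an approximately Killing field to an exact one, the relevant cokernel need not vanish, and unique continuation only propagates a Killing field that already exists --- it does not create one. Indeed, if this step could be carried out it would resolve the conjecture, and the authors would have stated a theorem rather than a conjecture. As written, your text is a plausible research program whose crux is asserted rather than proved, so it does not constitute a proof of the statement.
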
 

Recall that all known examples of compact Ricci flat spaces have special holonomy. The conjecture that this is true for any compact Ricci-flat space has been called the Besse conjecture, see \cite[p. 19]{MR2371700}, \cite[Remark 1.2]{botvinnik1999rigidly}. Although attempts have been made to construct compact Ricci flat four-manifolds that do not have special holonomy, cf. \cite{brendle2017gluing} and references therein, these have not been successful. The analog of the Besse conjecture for gravitational instantons would be the statement that all gravitational instantons are Hermitian. We therefore ask the following 
\begin{question} 
Are there non-Hermitian gravitational instantons? 
\end{question}

\subsection*{Overview of this paper} 
In Section~\ref{sec:prel} we provide some background material and notation. The Chen-Teo 5-parameter family of metrics is introduced in Section~\ref{sec:CTmetric} and its rod structure is described in Section~\ref{sec:rodstruct}. Section \ref{sec:AlgSpec} proves that the Chen-Teo metric is algebraically special. The restriction from the Chen-Teo 5-parameter family of metrics to the Chen-Teo 2-parameter instanton family is described in Section~\ref{sec:CTInst} and regularity and asymptotic flatness is proved in Sections~\ref{sec:regularity}  and \ref{sec:asymptotics}, respectively. Section~\ref{sec:ProofMainThm} is devoted to the proof of our main theorem.

\section{Preliminaries and notation} \label{sec:prel}

We shall use abstract index notation following \cite{PR:I,PR:II}. Let $(\cM, g_{ab}, \nabla_a)$ be a Riemannian four-manifold with metric $g_{ab}$ and Levi-Civita connection $\nabla_a$. Most computations were performed using a modified version of the package Spinframes \cite{SpinFrames} allowing for Riemannian signature. The package is  based on the symbolic computer algebra package xAct for Mathematica\textsuperscript{TM}.

\subsection{AF instantons} \label{sec:AF} 
Since we shall mainly be concerned with AF geometry in this paper, we give a precise definition.  Recall that it follows from the Cheeger-Gromoll splitting theorem that a non-flat gravitational instanton has one end.  
\begin{definition}[AF Instanton, \protect{\cite[\S II]{lapedes:1980}}] \label{def:AF}
Let $\kappa, \Omega \in \Reals$, $\kappa \ne 0$. Consider $\Reals^4$ with coordinates $(\tau, r, \theta, \phi) \in \Reals \times \Reals_+ \times [0,\pi] \times [0,2\pi]$ so that the flat metric takes the form 
\begin{align} 
\d\tau^2 + \d r^2 + r^2(\d\theta^2 + \sin^2\theta \d\phi^2) .
\end{align} 
Let $(\mathring{\cM}, \mathring{g}_{ab})$ be the flat space defined as $\Reals^4/\sim$ where the equivalence relation $\sim$ is given by the identification 
\begin{align}
(\tau, r,\theta,\phi) \sim (\tau+2\pi/\kappa,r,\theta,\phi + 2\pi\Omega/\kappa) .
\end{align} 
A non-flat Riemannian instanton $(\cM, g_{ab})$ is said to be AF, with parameters $\kappa, \Omega$, if there is a compact $K \subset \cM$ and a diffeomorphism $\Phi: \cM \setminus K \to \mathring{\cM} \setminus \{ r \leq 1\}$ such that 
\begin{align} 
g - \Phi^* \mathring{g} = O(1/r) .
\end{align}  
\end{definition} 
\begin{remark}
The Euclidean Kerr instanton is AF with parameters $\kappa, \Omega$ corresponding to surface gravity and rotation speed, respectively. The non-rotating Euclidean Schwarzschild instanton is the limit of Kerr with $\Omega = 0$.  
\end{remark}

\subsection{Rod structures and regularity} \label{sec:rods}

The Chen-Teo instanton belongs to a 5-parameter family of metrics \cite{2015PhRvD..91l4005C} which in general have conical singularities. It was constructed using the Belinski-Zakharov  \cite{1978JETP...48..985B,1979JETP...50....1B} soliton method. This leads to a  metric with two commuting Killing fields, in Weyl-Papapetrou coordinates $(x^a) = (\tau, \phi, \rho, z)$,  
\begin{align} \label{eq:WP-metric} 
g_{ab} \d x^a \d x^b = G_{ij} \d\varphi^i \d\varphi^i + \mathbf{f}(\d\rho^2 + \d z^2)
\end{align} 
Here $(\varphi^1,\varphi^2)= (\tau, \phi)$ are Killing coordinates,  with corresponding Killing fields $\partial_\tau$, $\partial_\phi$. 
The metric coefficients $G_{ij}, \mathbf{f}$ are functions of $(\rho, z) \in \Reals_+ \times \Reals$, with $\rho$ related to the Gram matrix $G = (G_{ij})_{i,j=1,2}$ by  
\begin{align} \label{eq:rhoG} 
\rho^2 = \vert \!\det G\vert .
\end{align} 
The four-metric $g_{ab}$ is positive definite for $\rho > 0$ provided $G_{ij}$ is positive definite and $\mathbf{f} > 0$. However, the metric constructed by the Belinski-Zakharov method leads to a  smooth four-manifold (after suitable identifications of the Killing coordinates $\tau,\phi$) only if additional conditions on the behavior of the metric near the axis $\rho = 0$ are imposed.

From \eqref{eq:rhoG}, the Gram matrix has a non-trivial kernel on the axis $\rho = 0$ and  vanishes for isolated values  $z_1 < z_2< \cdots< z_n$ of $z$. These points, called turning points, are zeros for both Killing fields $\partial_\tau, \partial_\phi$. Letting $z_0 = -\infty$, $z_{n+1} = \infty$, the intervals $(z_k, z_{k+1})$, $k \in \{0,\dots, n\}$ are called rods. For each rod, the kernel of the Gram matrix is one-dimensional, defining a rod-vector $K_k$, with surface gravity
\begin{align} 
\kappa_k = \sqrt{- \half \nabla_a (K_k)_b \nabla^a (K_k)^b} \bigg{|}_{\text{Rod $k$}}. 
\end{align} 
where the limit $\rho \to 0$ is taken on each rod. In order to construct a smooth 4-manifold, the Killing coordinates must be identified in such a way that the one-parameter groups generated by each  normalized rod vector 
\begin{align} \label{eq:RodkGeneral}
\ell_k = K_k/\kappa_k = \ell_k^\tau \partial_\tau + \ell_k^\phi \partial_\phi 
\end{align} 
have closed orbits, with period $2\pi$. This is possible if and only if the $2\times 2$ matrices $\Phi_k$ defined by 
\begin{align} 
(\ell_k,\ell_{k+1}) = \Phi_k (\ell_{k-1},\ell_{k}), \quad k = 1,\dots,n , 
\end{align} 
and their inverses $\Phi_k^{-1}$ have integer entries, i.e. $\Phi_k \in \SL(2,\mathbb{Z})$. In particular, 
\begin{align}\label{eq:RegCond}  
\det \Phi_k = \pm 1, \quad k=1,\dots, n.
\end{align} 
Provided that $g_{ab}$ is smooth and non-degenerate for $\rho > 0$, and that these  regularity conditions are satisfied, making the identifications
\begin{align}\label{eq:tauphi-identify}  
(\tau,\phi) \sim (\tau,\phi) + 2\pi (\ell_0^\tau,  \ell_0^\phi) , \quad
(\tau,\phi) \sim (\tau,\phi) + 2\pi (\ell_1^\tau,  \ell_1^\phi),
\end{align}
the metric extends smoothly to the axis $\rho = 0$, and we have a smooth,  Ricci-flat, four-manifold $(\cM, g_{ab})$. By construction, the torus group $U(1) \times U(1)$ acts effectively by isometries on $(\cM, g_{ab})$, with $n$ fixed points, corresponding to the turning points, cf. \cite{2010NuPhB.838..207C}, \cite{2016PhRvD..93d4021C}, \cite{Nilsson-Inst}.
\begin{remark}
With component notation $\ell_k = (\ell_k^\tau,\ell_k^\phi)$ for the vector $\ell_k^\tau \partial_\tau + \ell_k^\phi \partial_\phi$, the normalized rod vectors \eqref{eq:RodkGeneral} take the form \cite{2015PhRvD..91l4005C}
\begin{align}\label{eq:RodkGram}
\ell_k = \pm \left( \sqrt{\mathbf{f} g_{\phi\phi}}, \sqrt{\mathbf{f} g_{\tau\tau}} \right)\bigg{|}_{\text{Rod $k$}},
\end{align}
with $\mathbf{f}$ given in \eqref{eq:WP-metric}.
\end{remark}
It is useful to introduce a basis for the rod vectors  where $\ell_0, \ell_1$ take the form $ (0,1)$ and $ (1,0)$, respectively. Expressing the normalized rod vectors $\ell_k$ in this basis yields a collection $\{v_0, \dots, v_n\}$ of 2-vectors with integer entries called the rod structure.  The integers 
\begin{align} \label{eq:pqDef}
p = \det (v_0 \ v_n), \quad q = \det (v_1 \ v_n),
\end{align} 
encode the topology of the asymptotic end.  We may without loss of generality assume $p \geq 0$, $q > 0$. In the ALE case,  the end has topology $\Reals_+ \times L(p,q)$, for coprime integers $p, q$, with $p > 0$, where $L(p,q)$ is a lens space. In the ALF case, $B = L(p,1)$, where we now allow $p=0$ in which case $B = S^1 \times S^2$.  
Thus $p = 0$ if and only if $(\cM, g_{ab})$ is AF.  See \cite{2021LMaPh.111..133K} for results on existence and uniqueness of AF Weyl-Papapetrou metrics with prescribed rod structure. 

\begin{remark} \label{rem:n3rodstructure}
In the case of $n=3$ turning points, the general regular, i.e. satisfying \eqref{eq:RegCond}, rod structure is given by
\begin{align}
v_0 = (0,1), \qquad
v_1 = (1,0), \qquad
v_2 = (-a,1), \qquad
v_3 = (1-ab,b),
\end{align}
with integers $a,b$, cf. \cite{Nilsson-Inst}. Assuming $p=0$ in \eqref{eq:pqDef} leads to $a  = b = \pm 1$ and without loss of generality one can choose the positive sign.
\end{remark}

\subsection{Null tetrad formalism}  \label{sec:nulltetrad}
In four dimensions complex geometry gives a unifying picture of Lorentzian and Riemannian geometry and the formalism we use here is guided by that fact. This perspective plays a fundamental role in the work of Penrose and collaborators, cf. e.g. \cite{PR:II}. To compute the connection and curvature we use a version of the Newman-Penrose null tetrad formalism, which was originally developed for Lorentzian signature to study General Relativity, but can be adapted to work for any signature and complex geometries, cf. \cite{2009arXiv0911.3364G} and references therein. The setup used here is such that the equations of \cite{PR:II} can be used with the appropriate interpretation. Let $(e_\ba^a)_{\ba = 1. \dots, 4}$ be a complex null tetrad, $(e_\ba^a) = (L^a, N^a, M^a, P^a)$, 
normalized by 
\begin{align} 
L^a N_a = 1, \qquad M^a P_a = -1, 
\end{align} 
all other inner products being zero. The metric takes the form
\begin{align} \label{eq:metric-tetrad} 
g_{ab} = 2  L_{(a} N_{b)} - 2 M_{(a} P_{b)}.
\end{align} 
For Riemannian signature, we additionally require
\begin{align} \label{RiemSignTetrad}
\bar L^a = N^a, \quad \bar M^a = -P^a.
\end{align} 
The volume form
\begin{align}
\epsilon_{abcd}={}&24 L_{[a}M_{c}N_{b}P_{d]},
\end{align}
is normalized such that $
\epsilon_{abcd} \epsilon^{abcd}= 24$. 
The dual of a 2-form $\omega_{ab}$ is defined by $ *\omega_{ab}= \frac{1}{2} \epsilon_{ab}{}^{cd}\omega_{cd}$. Recall that in a four-dimensional Riemannian space, ${*}{*}\omega_{ab} = \omega_{ab}$ so that 2-forms can be de\-composed into self dual and anti-self dual parts. 
A basis for the space of  2-forms can be expressed in terms of three self dual 2-forms,
\begin{align} \label{eq:ZDef}
Z^{0}_{ab}= 2 P_{[a}N_{b]}, \qquad
Z^{1}_{ab}= L_{[a}N_{b]}
 -  M_{[a}P_{b]}, \qquad
Z^{2}_{ab}= 2 L_{[a}M_{b]},
\end{align}
and three anti-self dual 2-forms,
\begin{align} \label{eq:ZtildeDef}
\tilde{Z}^{0}_{ab}= 2 M_{[a}N_{b]}, \qquad
\tilde{Z}^{1}_{ab}= L_{[a}N_{b]}
 +  M_{[a}P_{b]}, \qquad
\tilde{Z}^{2}_{ab}= 2 L_{[a}P_{b]}.
\end{align}
Here and below we use a tilde $\tilde{\hspace{1pt}}$ to indicate fields on the anti-self dual side. By \eqref{RiemSignTetrad} we have
\begin{align} 
\bar{Z}^{0}_{ab} = Z^{2}_{ab}, \quad \bar{Z}^{1}_{ab} = - Z^{1}_{ab}, \quad \bar{\tilde{Z}}^{0}_{ab} = \tilde{Z}^{2}_{ab}, \quad \bar{\tilde{Z}}^{1}_{ab} = - \tilde{Z}^{1}_{ab} .
\end{align} 
The connection also splits into self dual and anti-self dual parts. The tetrad components of the connection are encoded in a set of complex scalars called spin coefficients, denoted with greek letters. They are defined by
\begin{subequations}\label{eq:SpinCoeff}
\begin{align}
M^b \nabla_a L_b ={}& \tau L_a + \kappa N_a - \rho M_a -\sigma P_a, \\
\frac{1}{2}(N^b \nabla_a L_b - P^b \nabla_a M_b ) ={}& \gamma L_a + \epsilon N_a - \alpha M_a - \beta P_a, \\
N^b \nabla_a P_b ={}& \nu L_a + \pi N_a - \lambda M_a -\mu P_a.
\end{align}
\end{subequations} 
and
\begin{subequations} \label{eq:SpinCoefftilde}
\begin{align}
P^b \nabla_a L_b ={}& \tilde\tau L_a + \tilde\kappa N_a - \tilde\sigma M_a - \tilde\rho P_a, \\
\frac{1}{2}(N^b \nabla_a L_b + P^b \nabla_a M_b ) ={}& \tilde\gamma L_a + \tilde\epsilon N_a - \tilde\beta M_a - \tilde\alpha P_a, \\
N^b \nabla_a M_b ={}& \tilde\nu L_a + \tilde\pi N_a - \tilde\mu M_a - \tilde\lambda P_a.
\end{align}
\end{subequations}
From \eqref{RiemSignTetrad} it follows that
\begin{align} \label{RiemSignSpinCoeffIds}
\alpha ={}\bar{\beta},\quad
\epsilon ={}- \bar{\gamma},\quad
\kappa ={}\bar{\nu},\quad
\sigma ={}- \bar{\lambda},\quad
\rho ={}- \bar{\mu},\quad
\tau ={}\bar{\pi},
\end{align}
and the same for the tilded spin coefficients. This reduces the set of spin coefficients from $12 + 12$ to $6 + 6$ independent complex scalars. 

Introducing the Newman-Penrose Weyl curvature scalars
\begin{subequations} \label{Psidef}
\begin{align}
\Psi_{0} = W_{1331}, \quad
\Psi_{1} = W_{1231}, \quad
\Psi_{2} = W_{1324}, \quad
\Psi_{3} = W_{1224}, \quad
\Psi_{4} =  W_{2442},\\
\tilde{\Psi}_{0} =  W_{1441}, \quad
\tilde{\Psi}_{1} = W_{1241}, \quad
\tilde{\Psi}_{2} = W_{1423}, \quad
\tilde{\Psi}_{3} = W_{1223}, \quad
\tilde{\Psi}_{4} = W_{2332},
\end{align}
\end{subequations}
the self dual Weyl tensor takes the form 
\begin{align} \label{eq:W+Z}
W^+ ={}& -\Psi_0 Z^0 \odot Z^0 + 4\Psi_1 Z^0 \odot Z^1 - \Psi_2 (4 Z^1 \odot Z^1+2 Z^0 \odot Z^2) \nonumber \\ 
& \quad + 4\Psi_3 Z^1 \odot Z^2 - \Psi_4 Z^2 \odot Z^2, 
\end{align}  
where $\odot$ is the symmetrized tensor product, such that $(X\odot Y)_{abcd} = (X_{ab}Y_{cd}+Y_{ab}X_{cd})/2$. 
The anti-self dual Weyl tensor $W^-$ has the analogous form in terms of the tilded quantities.  Due to \eqref{RiemSignTetrad} we have 
\begin{align} \label{PsiRiemSign}
\bar\Psi_0 = \Psi_4, \qquad 
\bar\Psi_1 = -\Psi_3, \qquad 
\bar\Psi_2 = \Psi_2,
\end{align}
and the same for the tilded quantities, which reduces the set from $5+5$ complex to $2+2$ complex and $1+1$ real (a priori independent) scalars. 

The self dual Weyl tensor 
$W^+_{abcd}$ 
defines a self-adjoint endomorphism on the three-dimensional space of self dual 2-forms. In terms of the orthonormal basis of self dual 2-forms 
\begin{align} 
\left\{ \tfrac{i}{2}(Z^0-Z^2), \tfrac{1}{2}(Z^0+Z^2), i  Z^1\right\}, 
\end{align} 
it is represented by the symmetric matrix\footnote{The basis used here differs by a constant factor from the spinor basis used in \cite[p.234]{PR:II}.}  \cite[p.234]{PR:II}, 
\begin{align} \label{PsiMatrix}
\mathbf{\Psi} = 
\left(\begin{array}{ccc}
\Psi_{0}{} - 2 \Psi_{2}{} + \Psi_{4}{} & i \Psi_{0}{} - i \Psi_{4}{} & -2 \Psi_{1}{} + 2 \Psi_{3}{}\\
i \Psi_{0}{} - i \Psi_{4}{} & - \Psi_{0}{} - 2 \Psi_{2}{} -  \Psi_{4}{} & -2i \Psi_{1}{} - 2i \Psi_{3}{}\\
-2 \Psi_{1}{} + 2 \Psi_{3}{} & -2i \Psi_{1}{} - 2i \Psi_{3}{} & 4 \Psi_{2}{}.
\end{array}\right)
\end{align}
This form will be used to discuss the algebraic type of the Weyl tensor. The analogous statement holds for the anti-self dual Weyl tensor.

\subsection{Tetrad rotations} \label{sec:tetrad-rot} 
The six dimensional rotation group acts independently on self dual 2-forms \eqref{eq:ZDef} and anti-self dual 2-forms \eqref{eq:ZtildeDef}. Here we restrict to the self dual side. Decorating the transformed quantities with a hat $\hat{\hspace{5pt}}$, the general rotation with three real parameter functions $\chi, \theta, \psi$ of range  $[0, 2\pi)$, rotates the null tetrad via
\begin{subequations} \label{TetradRotation}
\begin{align}
\hat{L}_{a}={}& e^{i \theta} \cos\chi L_{a} + e^{i \psi}  \sin\chi P_{a} ,\\
\hat{N}_{a}={}& e^{-i \theta}\cos\chi N_{a} -  e^{-i \psi} \sin\chi M_{a},\\
\hat{M}_{a}={}& e^{i \theta} \cos\chi M_{a} + e^{i \psi}  \sin\chi N_{a} ,\\
\hat{P}_{a}={}& e^{-i \theta}\cos\chi P_{a} -   e^{-i \psi} \sin\chi L_{a}.
\end{align}
\end{subequations} 
The self dual 2-forms transform according to
\begin{subequations}
\begin{align}
\hat{Z}^{0}{}_{ab}={}&\frac{\cos^2\chi Z^{0}{}_{ab}}{e^{2i \theta}}
 -  \frac{\sin(2 \chi) Z^{1}{}_{ab}}{e^{i \psi + i \theta}}
 + \frac{\sin^2\chi Z^{2}{}_{ab}}{e^{2i \psi}}, \\
\hat{Z}^{1}{}_{ab}={}&\frac{e^{i \psi} \sin(2 \chi) Z^{0}{}_{ab}}{2 e^{i \theta}}
 + \cos(2 \chi) Z^{1}{}_{ab}
 -  \frac{e^{i \theta} \sin(2 \chi) Z^{2}{}_{ab}}{2 e^{i \psi}}, \\
\hat{Z}^{2}{}_{ab}={}&e^{2i \psi} \sin^2\chi Z^{0}{}_{ab}
 + e^{i (\psi + \theta)} \sin(2 \chi) Z^{1}{}_{ab}
 + e^{2i \theta} \cos^2\chi Z^{2}{}_{ab},
\end{align}
\end{subequations}
while the anti-self dual 2-forms are invariant. From \eqref{Psidef} it follows that the Weyl scalars transform according to
\begin{subequations} \label{PsiTrafo}
\begin{align}
\hat{\Psi}_{0}{}={}&e^{4i \theta} \cos^4\chi \Psi_{0}{}
 + 4 e^{i (\psi + 3 \theta)} \cos^3\chi \Psi_{1}{} \sin\chi
 + 6 e^{2i (\psi + \theta)} \cos^2\chi \Psi_{2}{} \sin^2\chi\nonumber\\
& - 4 e^{i (3 \psi + \theta)} \cos\chi \overline{\Psi}_{1}{} \sin^3\chi
 + e^{4i \psi} \overline{\Psi}_{0}{} \sin^4\chi ,\\
\hat{\Psi}_{1}{}={}&\tfrac{1}{2} e^{2i \theta} \cos(2 \chi) \Psi_{1}{}
 + \tfrac{1}{2} e^{ i  2 \theta} \cos(4 \chi) \Psi_{1}{}
 - \tfrac{1}{2} e^{2i \psi} \cos(2 \chi) \overline{\Psi}_{1}{}\nonumber\\
& +  \tfrac{1}{2} e^{ i 2 \psi } \cos(4 \chi) \overline{\Psi}_{1}{}
 -  \frac{e^{3i \theta} \Psi_{0}{} \sin(2 \chi)}{4 e^{i \psi}}\nonumber\\
& -  \frac{e^{3i \theta} \cos(2 \chi) \Psi_{0}{} \sin(2 \chi)}{4 e^{i \psi}}
 + \tfrac{3}{2} e^{i (\psi + \theta)} \cos(2 \chi) \Psi_{2}{} \sin(2 \chi)\nonumber\\
& + \frac{e^{3i \psi} \overline{\Psi}_{0}{} \sin(2 \chi)}{4 e^{i \theta}}
 -  \frac{e^{3i \psi} \cos(2 \chi) \overline{\Psi}_{0}{} \sin(2 \chi)}{4 e^{i \theta}},\\
\hat{\Psi}_{2}{}={}&\frac{e^{2i \theta} \Psi_{0}{}}{8 e^{2i \psi}}
 -  \frac{e^{2i \theta} \cos(4 \chi) \Psi_{0}{}}{8 e^{2i \psi}}
 + \tfrac{1}{4} \Psi_{2}{}
 + \tfrac{3}{4} \cos(4 \chi) \Psi_{2}{}
 + \frac{e^{2i \psi} \overline{\Psi}_{0}{}}{8 e^{2i \theta}}\nonumber\\
& -  \frac{e^{2i \psi} \cos(4 \chi) \overline{\Psi}_{0}{}}{8 e^{2i \theta}}
 -  \frac{e^{i \theta} \Psi_{1}{} \sin(4 \chi)}{2 e^{i \psi}}
 - \frac{e^{i \psi} \overline{\Psi}_{1}{} \sin(4 \chi)}{2 e^{i \theta}},\\
\hat{\tilde{\Psi}}_{i}{}={}&\tilde{\Psi}_{i}{}, \qquad i=0,1,2.
\end{align}
\end{subequations}

\subsection{Killing tetrad} 
Any torus symmetric geometry admits a tetrad of the following form.
\begin{definition} \label{def:KillingTetrad}
Let $(\cM, g_{ab})$ be a Riemannian four-manifold with coordinates $(\tau, \phi, x, y)$ such that $\partial_\tau, \partial_\phi$ are Killing fields. A complex null tetrad with 
\begin{align}
L^a = A(x,y) \partial_\tau^a + B(x,y) \partial_\phi^a,
\end{align}
is called a Killing tetrad. 
\end{definition}
If $(\cM, g_{ab})$  admits a Killing tetrad, this has many useful consequences. For example
\begin{subequations}
\begin{align} \label{VanishingSpinCoeffIsoTetrad}
\gamma ={}&0 = \epsilon,\quad
\sigma ={}0 = \lambda,\quad
\rho ={}0 = \mu,\\
\tilde{\gamma}={}&0 = \tilde{\epsilon},\quad
\tilde{\sigma}={}0 = \tilde{\lambda},\quad
\tilde{\rho}={}0 = \tilde{\mu},
\end{align}
\end{subequations}
which leaves $3 + 3$ independent spin coefficients. The first Cartan structure equation yields for a Killing tetrad
\begin{subequations} \label{eq:SpinCoeffExpandedForm}
\begin{align}
\alpha ={}&\tfrac{1}{4} N^{a} P^{b} \partial_{b}L_{a}
 -  \tfrac{1}{4} L^{a} P^{b} \partial_{b}N_{a}
 -  \tfrac{1}{2} M^{b} P^{a} \partial_{b}P_{a}
 + \tfrac{1}{2} M^{a} P^{b} \partial_{b}P_{a},\\
\kappa ={}&- L^{a} M^{b} \partial_{b}L_{a},\\
\pi ={}&\tfrac{1}{2} N^{a} P^{b} \partial_{b}L_{a}
 + \tfrac{1}{2} L^{a} P^{b} \partial_{b}N_{a}, \\
\tilde{\alpha}={}&\tfrac{1}{4} M^{b} N^{a} \partial_{b}L_{a}
 + \tfrac{1}{2} M^{b} P^{a} \partial_{b}M_{a}
 -  \tfrac{1}{2} M^{a} P^{b} \partial_{b}M_{a}
 -  \tfrac{1}{4} L^{a} M^{b} \partial_{b}N_{a},\\
\tilde{\kappa}={}&- L^{a} P^{b} \partial_{b}L_{a},\\
\tilde{\pi}={}&\tfrac{1}{2} M^{b} N^{a} \partial_{b}L_{a}
 + \tfrac{1}{2} L^{a} M^{b} \partial_{b}N_{a}.
\end{align}
\end{subequations}
We have
\begin{align}
\bar \pi = -\tilde \pi,
\end{align}
so that the connection is encoded in five complex spin coefficients.

The Weyl scalars in a Killing tetrad satisfy 
\begin{align}\label{PsiIsomTetrad}
\Psi_1 = 0 = \Psi_3, \qquad 
\tilde\Psi_1 = 0 = \tilde\Psi_3,
\end{align}
which leaves us with one complex and one real scalar on each side. Assuming that $(\cM, g_{ab})$ is Ricci flat, the second Cartan structure equation can be used to express the curvature scalars in terms of spin coefficients and their derivatives. Restricting to a Killing tetrad, we have 
\begin{subequations} \label{eq:PsiSimpForm}
\begin{align}
\Psi_{2}{}={}&\kappa \bar{\kappa}
 -  \pi \bar{\pi},\\
\tilde{\Psi}_{2}{}={}&\tilde{\kappa} \overline{\tilde{\kappa}}
 -  \pi \bar{\pi},\\
\Psi_{0}{}={}&
 -  (M^a \partial_a - 3 \bar{\alpha} -\tilde{\alpha}  -2 \bar{\pi})\kappa,\\
 \tilde{\Psi}_{0}{}={}&
 -  (P^a \partial_a  -3 \overline{\tilde{\alpha}}  - \alpha +2 \pi  )\tilde{\kappa}.
\end{align}
\end{subequations}
Define the curvature invariants 
\begin{align} 
I = W^+_{abcd} W^{+abcd}, \quad J = W^+_{ab}{}^{cd}W^+_{cd}{}^{ef}W^+_{ef}{}^{ab},
\end{align} 
and analogously $\tilde I, \tilde J$ on the anti-self dual side. Then $W^+_{abcd}$ is algebraically  special if and only if 
\begin{align} \label{eq:IJcond}
I^3 - 6 J^2 = 0 , 
\end{align}
cf. \cite[\S 8.3]{PR:II}. We have the following factorization. 
\begin{lemma} \label{rem:IJinvariants}
Let $(e_\ba^a)$ be a Killing tetrad. Then
\begin{align} \label{eq:IJcondKilling}
I^3 - 6 J^2 = 512 \Psi_{0}{} \Psi_{4}{} (\Psi_{0}{} \Psi_{4}{} -9 \Psi_{2}{}^2)^2,
\end{align}
and the Kretschmann scalar $\vert W \vert^2$ is given by 
\begin{align}\label{eq:Kretschmann}
 I + \tilde I = 24 \Psi_{2}{}^2
 + 8 \Psi_{0}{} \Psi_{4}{}
 + 24 \tilde{\Psi}_{2}{}^2
 + 8 \tilde{\Psi}_{0}{} \tilde{\Psi}_{4}{}.
\end{align}
\end{lemma}

\begin{remark}\label{rem:bfcalc}
Let $A = A(x,y)$. In a Killing tetrad we have 
\begin{align} \label{eq:dA}
(\d A)_a = -P_a M^b \partial_b A - M_a P^b \partial_b A.
\end{align}
Using the metric in Weyl-Papapetrou coordinates \eqref{eq:WP-metric}, contracting with $M^aP^b$ and using \eqref{eq:dA} for $\rho$ or $z$ leads to
\begin{align}
\mathbf{f} 
={}&- \frac{1}{2 (M^b \partial_b \rho) (P^c \partial_c \rho)}
=- \frac{1}{2 (M^b \partial_b  z) (P^c \partial_c z)}.
\end{align}
We also notice
\begin{align}
(M^b \partial_b z)^2 + (M^b \partial_b \rho)^2={}&0 
= (P^b \partial_b z)^2 + (P^b \partial_b \rho)^2.
\end{align}
\end{remark}

\section{The Chen-Teo metric} \label{sec:CTmetric} 

The Chen-Teo 5-parameter family \cite{2015PhRvD..91l4005C} of Ricci flat metrics is given in C-metric coordinates $(x,y,\tau,\phi)$ by 
\begin{align} \label{eq:ChenTeo5paramMetric}
\d s^2={}&\frac{k H}{ (x -  y)^3} \left( \frac{\d x^2}{X} -\frac{\d y^2}{Y} - \frac{XY}{kF}\d\phi^2  \right)  + \frac{(F \d\tau+ G \d\phi)^2}{F H (x - y)}.
\end{align}
The five real functions $H,F,G,X,Y$ depending only on $(x,y) \in \Reals^2$ are given by\footnote{In the remainder of this paper, the symbol $\nu$ will only refer to the parameter and not to a spin coeffcient, as it can be avoided by \eqref{RiemSignSpinCoeffIds}.} 
\begin{subequations} \label{CTMetricFunctions}
\begin{align}
H(x,y)={}&(\nu x + y) \bigl((\nu x -  y) (a_{1}{} -  a_{3}{} x y) - 2 (1 -  \nu) (a_{0}{} -  a_{4}{} x^2 y^2)\bigr), \label{CTMetricFunctions:H} \\
F(x,y)={}&X y^2
 -  x^2 Y,\\
G(x,y)={}&X (a_{0}{} \nu^2 + 2 a_{3}{} \nu y^3 -  a_{4}{} y^4 + 2 a_{4}{} \nu y^4)
 + (a_{0}{} - 2 a_{0}{} \nu - 2 a_{1}{} \nu x -  a_{4}{} \nu^2 x^4) Y,\\
X(x)={}&a_{0}{} + a_{1}{} x + a_{2}{} x^2 + a_{3}{} x^3 + a_{4}{} x^4,\\
Y(y) ={}&a_{0}{} + a_{1}{} y + a_{2}{} y^2 + a_{3}{} y^3 + a_{4}{} y^4,
\end{align}
\end{subequations}
with parameters 
\begin{align} \label{eq:CTparameters} 
(k, \nu,a_0, \dots, a_4) \in \Reals^7.
\end{align} 
It follows from \eqref{eq:ChenTeo5paramMetric} that $\partial_\tau$ and $\partial_\phi$ are Killing fields of $g_{ab}$ and the determinant is given by
\begin{align}
 \det g = \frac{k^2 H^2}{(x-y)^{10}}.
\end{align}

\begin{remark}
The Chen-Teo metric  \eqref{eq:ChenTeo5paramMetric} admits, in addition to the two Killing symmetries,  two continuous and two discrete symmetries, cf. \cite[\S II.A]{2015PhRvD..91l4005C}. Hence, two of the seven parameters $k, \nu, a_0, \dots, a_4$ are redundant, so that \eqref{eq:ChenTeo5paramMetric} represents a five-parameter family of line elements.  
\end{remark}
Let $x_1, \dots, x_4$ be the roots of $X$ such that\footnote{The additional assumption of four real roots is not necessary for Ricci flatness.} 
\begin{align} 
X(x) = a_4 (x - x_1)(x-x_2)(x-x_3)(x-x_4).
\end{align} 
The Chen-Teo metrics can alternatively be parametrized by 
$k, \nu, a_4, x_1, \dots, x_4$. 
The parameters $a_0, \dots, a_4$ and $a_4, x_1, \dots, x_4$  are related by
\begin{subequations} \label{eq:aToxEqs}
\begin{align}
a_{0}{}={}&a_{4}{} x_{1}{} x_{2}{} x_{3}{} x_{4}{},\\
a_{2}{}={}&a_{4}{} (x_{1}{} x_{2}{} + x_{1}{} x_{3}{} + x_{2}{} x_{3}{} + x_{1}{} x_{4}{} + x_{2}{} x_{4}{} + x_{3}{} x_{4}{}),\\
a_{1}{}={}&- a_{4}{} (x_{1}{} x_{2}{} x_{3}{} + x_{1}{} x_{2}{} x_{4}{} + x_{1}{} x_{3}{} x_{4}{} + x_{2}{} x_{3}{} x_{4}{}),\\
a_{3}{}={}&- a_{4}{} (x_{1}{} + x_{2}{} + x_{3}{} + x_{4}{}).
\end{align}
\end{subequations} 

\begin{lemma} \label{lem:FirstOrderIds} 
The metric functions \eqref{CTMetricFunctions} satisfy the following first order differential relations,
\begin{subequations} \label{eq:1stid}
\begin{align}
\partial_x \left( \frac{G y}{F Y}  + \frac{H x}{F (x -  y)} \right) ={}& - \frac{2 (1 -  \nu) (\nu x + y)}{(x -  y)^3},\label{eq:dxId}\\
\partial_y \left( \frac{G x}{F X} + \frac{H y}{F (x -  y)} \right) ={}& \frac{2 (1 -  \nu) (\nu x + y)}{(x -  y)^3},\label{eq:dyId}\\
(x \partial_x -  y \partial_y) H={}&- \frac{2 (1 -  \nu) F (\nu x + y)}{(x -  y)^2} + \frac{2 H (\nu x^2 + y^2)}{(x -  y) (\nu x + y)}. \label{eq:Hflow}
\end{align}
\end{subequations}
\end{lemma}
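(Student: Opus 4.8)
The plan is to treat all three relations as identities of rational functions in $(x,y)$ whose coefficients are polynomials in the parameters $\nu, a_0, \dots, a_4$ (the overall scale $k$ does not enter), and to reduce each to a polynomial identity that can be checked directly. The essential simplification is that $X$ depends only on $x$ and $Y$ only on $y$, so that $\partial_x Y = 0 = \partial_y X$, $\partial_x X = X'(x)$, $\partial_y Y = Y'(y)$, and every derivative in \eqref{eq:1stid} is elementary once $F=Xy^2-x^2Y$, $G$ and $H$ are written out explicitly.

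For \eqref{eq:dxId} I would first carry out the $x$-differentiation with the quotient rule, keeping $1/Y$ as an $x$-independent factor, so that $\partial_x\!\left(\tfrac{Gy}{FY}+\tfrac{Hx}{F(x-y)}\right)$ becomes a rational function with denominator $F^2Y(x-y)^2$. Multiplying the whole relation through by $F^2Y(x-y)^3$ then reduces \eqref{eq:dxId} to a single polynomial identity in $x,y$; substituting the explicit polynomials and collecting in powers of $x,y$ verifies it. The relation \eqref{eq:dyId} is handled identically, differentiating in $y$ and clearing by $F^2X(x-y)^3$. I do not expect a clean $x\leftrightarrow y$ symmetry to relate the two: under the swap one has $F\mapsto-F$, but the right-hand sides transform into $2(1-\nu)(\nu y+x)/(x-y)^3$ versus $2(1-\nu)(\nu x+y)/(x-y)^3$, which agree only when $\nu=1$. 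Hence \eqref{eq:dxId} and \eqref{eq:dyId} are parallel but genuinely separate computations.

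For \eqref{eq:Hflow} I would exploit that $x\partial_x-y\partial_y$ acts on a monomial $x^iy^j$ by the scalar $i-j$, so $(x\partial_x-y\partial_y)H$ is obtained termwise from the expansion of $H$; equivalently one uses $(x\partial_x-y\partial_y)X=xX'$, $(x\partial_x-y\partial_y)Y=-yY'$ with the Leibniz rule on $F=Xy^2-x^2Y$. Since $\nu x+y$ divides $H$, multiplying the whole relation by $(x-y)^2(\nu x+y)$ clears all denominators and turns \eqref{eq:Hflow} into the polynomial identity
\[
(x\partial_x-y\partial_y)H\,(x-y)^2(\nu x+y) = -2(1-\nu)F(\nu x+y)^2 + 2H(\nu x^2+y^2)(x-y),
\]
which is again checked by expansion and coefficient comparison.

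The only real obstacle is the size of the polynomials: $G$ has total degree eight and $F$ degree six, so after differentiation and clearing denominators one is comparing polynomials of degree on the order of twenty in $(x,y)$, with coefficients polynomial in six parameters. There is no conceptual difficulty—each cleared identity holds identically as a polynomial, hence the original relation holds as a rational identity wherever $F,X,Y,(x-y)$ and $\nu x+y$ are nonzero—but the bookkeeping is heavy and is most safely done with the computer-algebra setup already used in this paper. To keep the verification transparent I would grade the cleared identities by powers of $\nu$: since $F$ is $\nu$-independent and $G,H$ enter linearly, every such identity is homogeneous of $\nu$-degree at most two and splits into a handful of smaller identities that must match separately.
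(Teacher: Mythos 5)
Your proposal is correct: all three relations are identities of rational functions in $(x,y)$ with polynomial dependence on the parameters, and clearing denominators and comparing coefficients is a legitimate (if heavy) verification — essentially what the paper means by ``a short computation.'' The paper, however, shortcuts the first two identities with an algebraic decomposition that your route misses: it records that
\begin{align*}
\frac{G y}{F Y} + \frac{H x}{F (x-y)} = \frac{(1-\nu)\bigl(2\nu x + (1-\nu) y\bigr)}{(x-y)^2} + \bigl(\text{a function of $y$ alone}\bigr),
\end{align*}
and similarly that $\tfrac{Gx}{FX}+\tfrac{Hy}{F(x-y)}$ is an explicit simple rational function of $(x,y)$ plus a function of $x$ alone. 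Since $\partial_x$ annihilates the $y$-only part, \eqref{eq:dxId} reduces to differentiating the displayed first term, with no large polynomial expansion, and \eqref{eq:dyId} follows in the same way; only \eqref{eq:Hflow} is left to direct verification. This decomposition is the real content of the lemma's proof, whereas your route trades that insight for degree-twenty bookkeeping. Your side observations — that there is no clean $x\leftrightarrow y$ symmetry (under the swap $F\mapsto -F$ while the right-hand sides fail to match except at $\nu=1$), and that the cleared identities split by $\nu$-degree into a few smaller ones — are both correct and would make the brute-force check more manageable.
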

\begin{proof}
All three identities can be verified by a short computation. Additionally, we note the following algebraic identities among the metric functions,
\begin{align*}
\frac{G y}{F Y} + \frac{H x}{F (x -  y)} ={}&\frac{(1 -  \nu) \bigl(2 \nu x + (1 -  \nu) y\bigr)}{(x -  y)^2} -  \frac{a_{1}{} \nu^2 + a_{2}{} \nu^2 y - a_3(2 - \nu) \nu y^2 +a_4(1 - \nu)^2 y^3)}{Y}, \\
 \frac{G x}{F X} + \frac{H y}{F (x -  y)}={}&
 \frac{(1 - \nu) \bigl( 2 y - (1 - \nu) x \bigr)}{(x -  y)^2} + \frac{a_{1}{} + (1-2\nu)(a_{2}{}x+a_3 x^2)  + a_{4}{} (1 - \nu)^2 x^3}{X}.
\end{align*}
The second terms on the right hand sides are functions of only $y$ and $x$, respectively, from which \eqref{eq:dxId} and \eqref{eq:dyId} follow directly.
\end{proof}

\subsection{Rod structure} \label{sec:rodstruct}
To compute the rod structure via \eqref{eq:RodkGram}, we need the coefficient function $\mathbf{f}$  of the Chen-Teo  metric \eqref{eq:WP-metric} in Weyl-Papapetrou  coordinates  \cite[(3.12)]{2015PhRvD..91l4005C}
\begin{align}\label{rhozdef} 
\rho = \frac{\sqrt{-Y X }}{(x -  y)^2}, \qquad
z = \frac{(x + y) (a_{1}{} + a_{3}{} x y) + 2 (a_{0}{} + a_{2}{} x y + a_{4}{} x^2 y^2)}{2 (x -  y)^2}.
\end{align}

A short computation using Remark \ref{rem:bfcalc}, and the tetrad \eqref{eq:IsometryFrame} for the Chen-Teo metric below, yields\footnote{This corrects \cite[Eq. (2.21)]{2018JHEP...02..008B}.}
\begin{align}\label{bfeq1}
\mathbf{f} ={}&\frac{4 k H (x -  y)^3}{ X \bigl(4 Y + (x -  y) \partial_y Y\bigr)^2 - Y \bigl(4 X - (x -  y) \partial_x X\bigr)^2 }.
\end{align}

Following \cite[\S III]{2015PhRvD..91l4005C} assume $x_1 < x_2 < x_3$ and consider the domain
\begin{align} \label{eq:OmegaDef}
 \Domain = \{ (x,y) \in (x_2,x_3)\times (x_1,x_2) \} .
 \end{align}
The rods are located at the boundary
\begin{subequations} \label{eq:Rodk5param}
\begin{align}
\Rod_1 :{}&\quad  x=x_2, \quad x_1 < y <x_2 \\  
\Rod_2 :{}&\quad  y=x_1, \quad x_2 < x <x_3 \\  
\Rod_3 :{}&\quad  x=x_1, \quad x_1 < y <x_2 \\  
\Rod_4 :{}&\quad y=x_2, \quad x_2 < x <x_3 
\end{align} 
\end{subequations}
while the turning points are located at the corners of $\Domain$,
\begin{align} \label{eq:nuts}
\Nut_1={}& (x_2,x_1), \quad \Nut_2 = (x_3,x_1), \quad \Nut_3 =(x_3, x_2),
\end{align} 
with the fourth corner $(x_2,x_2)$ corresponding to the asymptotic end, see Figure~\ref{fig:domain}. In Weyl Papapetrou coordinates \eqref{rhozdef} the turning points are located at $\rho=0$ and
\begin{align}
z_1 = - \frac{a_4(x_1 x_2 + x_3 x_4)}{2}, \quad
z_2 = - \frac{a_4(x_1 x_3 + x_2 x_4)}{2}, \quad
z_3 = - \frac{a_4(x_2 x_3 + x_1 x_4)}{2}.
\end{align}
The rod structure of \eqref{eq:ChenTeo5paramMetric} can now be computed via \eqref{eq:RodkGram}, to be
\begin{subequations}\label{eq:ellk}
\begin{align}
\ell_0 =&
 \left(
\frac{2 \sqrt{k} \bigl( x_{1}{} x_{3}{} x_{4}{} - \nu^2 x_{2}{}^3 + 2 \nu x_{2}{} (x_{3}{} x_{4}{} + x_{1}{} x_{3}{} + x_{1}{} x_{4}{})\bigr)}{x_{12}\, x_{23}\, x_{24}} , \frac{2 \sqrt{k} x_{2}{}}{a_{4}{} x_{12}\, x_{23}\, x_{24}}
\right), \\
\ell_1 =& 
\left(
\frac{2 \sqrt{k} \bigl(x_{1}{}^3 - \nu^2 x_{2}{} x_{3}{} x_{4}{} + 2 \nu x_{1}{}^2 (x_{2}{} + x_{3}{} + x_{4}{})\bigr)}{x_{12}\, x_{13}\, x_{14}} , \frac{2 \sqrt{k} x_{1}{}}{a_{4}{} x_{12}\, x_{13}\, x_{14}}
\right), \\
\ell_2 =&
\left(
\frac{2 \sqrt{k} \bigl( x_{1}{} x_{2}{} x_{4}{} - \nu^2 x_{3}{}^3 + 2 \nu x_{3}{} (x_{2}{} x_{4}{} + x_{1}{}x_{2}{} + x_{1}{}x_{4}{})\bigr)}{x_{13}\, x_{23}\, x_{34}} , \frac{2 \sqrt{k} x_{3}{}}{a_{4}{} x_{13}\, x_{23}\, x_{34}}
\right), \\
\ell_3 =&
\left(
\frac{2 \sqrt{k} \bigl(x_{2}{}^3 - \nu^2 x_{1}{} x_{3}{} x_{4}{} + 2 \nu x_{2}{}^2 (x_{1}{} + x_{3}{} + x_{4}{})\bigr)}{x_{12}{}\, x_{23}{}\, x_{24}} , \frac{2 \sqrt{k} x_{2}{}}{a_{4}{} x_{12}{}\, x_{23}{}\, x_{24}}
\right),
\end{align}
\end{subequations}
where $x_{ij} = x_i - x_j$. The vectors $ v_i, i = 0, \dots, 3 $ of the rod structure are given by
\begin{align}\label{eq:vi}
v_i = A \ell_i, 
\quad
\text{with} 
\qquad
A = \frac{1}{\det(\ell_0, \ell_1)}
\left(\begin{array}{cc}
-\ell_0^\phi & \ell_0^\tau \\
\ell_1^\phi & -\ell_1^\tau
\end{array}\right).
\end{align}
Define the asymptotic nut charge as discussed in \cite[\S II.A]{2015PhRvD..91l4005C}, 
\begin{align}
\mathbf{n}={}&\frac{\sqrt{k} \bigl(2 \nu x_{2}{} (x_{1}{} x_{2}{} -  x_{1}{} x_{3}{} + x_{2}{} x_{3}{} -  x_{1}{} x_{4}{} + x_{2}{} x_{4}{} -  x_{3}{} x_{4}{}) -  (1 + \nu^2) ( x_{1}{} x_{3}{} x_{4}{}- x_{2}{}^3)\bigr)}{2 \sqrt{1 -  \nu^2} (x_{2}{} - x_{1}{}) (x_{2}{} -  x_{3}{}) (x_{2}{} -  x_{4}{})}.
\end{align}
If the metric is regular, then in view of \eqref{eq:pqDef} and Remark \eqref{rem:n3rodstructure} the components
\begin{align} \label{eq:pqCT}
p ={}&  - v_3[1]  = -\frac{2 \sqrt{1 -  \nu^2} \mathbf{n} (x_{1}{} -  x_{2}{}) x_{2}{} (x_{1}{} -  x_{3}{}) (x_{1}{} -  x_{4}{})}{\sqrt{k} (x_{1}{} + \nu x_{2}{})^2 (x_{1}{} x_{2}{} -  x_{3}{} x_{4}{})}, \\
q ={}& v_3[2] = \frac{(x_{1}{} -  x_{2}{}) \Bigl(x_{1}{}^2 x_{2}{} + \nu^2 x_{2}{} x_{3}{} x_{4}{} + x_{1}{} \bigl(x_{2}{}^2 + \nu^2 x_{3}{} x_{4}{} + 2 \nu x_{2}{} (x_{3}{} + x_{4}{})\bigr)\Bigr)}{(x_{1}{} + \nu x_{2}{})^2 (x_{1}{} x_{2}{} -  x_{3}{} x_{4}{})},
\end{align}
must be integers. We emphasize that, in contrast to $\mathbf{n}$, the quantities $p$ and $q$ do not depend on the parameter $k$. The AF case, which has $p = 0, q= 1$, is discussed in Section \ref{sec:CTInst}.

\begin{remark}
Note that the for the function $\mathbf{f}$ defined in \eqref{bfeq1}, $\sqrt{\rho^2+(z-z_i)^2}\mathbf{f}$ with $z_i = z(n_i)$ has a limit at the turning points. This is compatible with the discussion in \cite[Appendix C]{BAIRD2021104310}.
\end{remark}

\subsection{Special geometry} \label{sec:AlgSpec}

A Killing tetrad for the Chen-Teo metric \eqref{eq:ChenTeo5paramMetric} is given by
\begin{subequations} \label{eq:IsometryFrame}
\begin{align}
L^{a}={}&\frac{1}{\sqrt{2}}\left(\sqrt{\frac{H (x -  y)}{F}}\partial_\tau^{a}  + i  \sqrt{\frac{F (x -  y)^3}{H X (-Y)}}\left(\partial_\phi^{a} -  \frac{G}{F}\partial_\tau^{a} \right)\right),\\
N^{a}={}&\frac{1}{\sqrt{2}}\left(\sqrt{\frac{H (x -  y)}{F}}\partial_\tau^{a}  - i  \sqrt{\frac{F (x -  y)^3}{H X (-Y)}}\left(\partial_\phi^{a} -  \frac{G}{F}\partial_\tau^{a} \right)\right),\\
M^{a}={}&\frac{i\sqrt{(x -  y)^3}}{\sqrt{2k H}}\left( \sqrt{X}\partial_x^{a} + i \sqrt{-Y}\partial_y^{a}\right),\\
P^{a}={}&\frac{i\sqrt{(x -  y)^3}}{\sqrt{2k H}}\left( \sqrt{X}\partial_x^{a} - i \sqrt{-Y}\partial_y^{a}\right).
\end{align}
\end{subequations}
Then the tetrad $L^a, N^a, M^a, P^a$ satisfies \eqref{RiemSignTetrad} and $g_{ab}$ can be expressed via \eqref{eq:metric-tetrad}.
We have
\begin{align} \label{eq:orientation}
L \wedge N \wedge M \wedge P = \sqrt{\det g}\, d x \wedge d y \wedge d\tau \wedge d\phi.
\end{align}

To express the spin coefficients in coordinates, it is convenient to  define the complex function
\begin{align} \label{eq:fDef}
f={}&x \sqrt{-Y} + i y\sqrt{X},
\end{align}
so that
\begin{align}
F = f \bar f.
\end{align}

\begin{prop} \label{lem:SpinCoeff5ParamFamily}
The non-vanishing spin coefficients $\pi, \kappa, \alpha, \tilde\pi, \tilde\kappa, \tilde\alpha$ in the tetrad \eqref{eq:IsometryFrame}, taking \eqref{RiemSignSpinCoeffIds} into account, are given by
\begin{subequations} \label{eq:sdspincoeff5param}
\begin{align}
 \frac{2 \sqrt{2k H} }{(x -  y)^{3/2}}  \pi 
={}& (i \sqrt{X} \partial_x  + \sqrt{-Y} \partial_y ) \log{\frac{\sqrt{-YX}}{(x-y)^2}}, \\
\frac{4  \bar{f} \sqrt{2 k H} }{f (x -  y)^{3/2}} \kappa
={}&- \frac{4 \bigl(i \nu \sqrt{X} + \sqrt{- Y}\bigr)}{\nu x + y}
 + \frac{i \partial_x X}{\sqrt{X}}
 -  \frac{ \partial_y Y}{\sqrt{- Y}}, \\
\frac{8  F \sqrt{2k H} }{(x -  y)^{3/2}} \alpha
={}&4i \bar{f} \bigl(\frac{i (1 + \nu) F}{(x -  y) (\nu x + y)} + \sqrt{-YX} \bigr)
 -  (f + \bar{f}) y \partial_x X + (f -  \bar{f}) x \partial_y Y,
\end{align}
\end{subequations}
and 
\begin{subequations}
\begin{align}
\tilde{\pi}={}&- \bar{\pi}, \\
\frac{4  \bar{f} \sqrt{2 k H} }{f (x -  y)^{3/2}} \tilde{\kappa}
={}&  \frac{4 (1 - \nu) \bar{f}^3 (\nu x + y)}{H xy (x -  y)^2 }
- \frac{4 \bar{f}}{x y}
 + \frac{2 \bar{f} \partial_x H}{H y}
 + \frac{2 \bar{f} \partial_y H}{H x}
  + \frac{i \partial_x X}{\sqrt{X}}
 + \frac{ \partial_y Y}{\sqrt{- Y}}, \\
\frac{8  F \sqrt{2 k H} }{(x -  y)^{3/2}} \tilde{\alpha}
={}& \frac{4 (1 - \nu) F^2 f (\nu x + y)}{H x (x -  y)^2 y}
 + \frac{4 \bigl(F f x + F \bar{f} (- x + y)  + f x^2 (x -  y) Y\bigr)}{x (x -  y) y}\nonumber\\
& + \frac{2 F \bar{f} \partial_x H}{H y}
 + \frac{2 F \bar{f} \partial_y H}{H x}
   -  (f + \bar{f}) y \partial_x X
 - ( f - \bar{f}) x \partial_y Y.
\end{align}
\end{subequations}
\end{prop}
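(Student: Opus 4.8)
The plan is to observe first that \eqref{eq:IsometryFrame} is a Killing tetrad in the sense of Definition \ref{def:KillingTetrad}: both $L^a$ and $N^a$ are, by inspection, linear combinations of $\partial_\tau^a$ and $\partial_\phi^a$ with coefficients depending only on $(x,y)$, so $L^a$ has the required form. Consequently the vanishing statement \eqref{VanishingSpinCoeffIsoTetrad} applies, and the only potentially non-zero spin coefficients are $\alpha,\kappa,\pi$ and their tilded counterparts. Since $\bar\pi=-\tilde\pi$ holds automatically, it remains only to evaluate the six contraction formulas collected in \eqref{eq:SpinCoeffExpandedForm}.

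To evaluate these I would first lower indices on the frame \eqref{eq:IsometryFrame} to obtain the covectors $L_a,N_a,M_a,P_a$, either directly from the coordinate components of the metric \eqref{eq:ChenTeo5paramMetric} or from the normalization \eqref{eq:metric-tetrad}. A useful simplification is that the derivations $M^b\partial_b$ and $P^b\partial_b$ occurring in \eqref{eq:SpinCoeffExpandedForm} involve only $\partial_x$ and $\partial_y$, since $M^a$ and $P^a$ have no $\partial_\tau,\partial_\phi$ components, and every covector component depends only on $(x,y)$. Substituting these data into the contractions for $\kappa,\pi,\alpha$ and $\tilde\kappa,\tilde\pi,\tilde\alpha$ then produces the raw expressions, which are sums of products of $\sqrt{X}$ and $\sqrt{-Y}$ with rational functions of $x,y$.

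The heart of the proof is the simplification. The organizing device is the complex function $f=x\sqrt{-Y}+iy\sqrt{X}$ with $F=f\bar f$: rewriting every occurrence of $F$ and of the square roots in terms of $f$ and $\bar f$ collapses the raw output into the displayed rational combinations, and in particular explains the prefactor $\bar f/f$ in the $\kappa$- and $\tilde\kappa$-formulas. For $\pi$ one recognizes $\sqrt{-YX}/(x-y)^2=\rho$ from \eqref{rhozdef}, so the right-hand side is, up to the stated normalization factor, the directional derivative of $\log\rho$ along $P^a$, matching the expected role of $\pi$. On the self dual side the metric function $H$ enters only through the overall normalization, whereas on the anti-self dual side genuine derivatives $\partial_xH,\partial_yH$ survive; here the algebraic identities recorded in the proof of Lemma \ref{lem:FirstOrderIds}, together with the flow equation \eqref{eq:Hflow}, are used to check that the $H$-dependent pieces combine into the stated form.

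I expect the main obstacle to be the sheer weight of the algebra rather than any conceptual difficulty. The six contractions expand into long sums in which the coefficients of the independent monomials in $\sqrt{X}$ and $\sqrt{-Y}$ must be matched term by term against the right-hand sides; this is most delicate for $\tilde\alpha$, where the $H$-derivative terms and the $F f,F\bar f$ terms have to be collected simultaneously. This is precisely the step the authors carry out with the Spinframes package, and a by-hand verification would proceed by isolating the coefficient of each such monomial and reducing modulo the differential and algebraic identities of Lemma \ref{lem:FirstOrderIds}.
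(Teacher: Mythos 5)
Your proposal is correct and follows essentially the same route as the paper: the authors' proof is precisely to evaluate the Killing-tetrad contraction formulas \eqref{eq:SpinCoeffExpandedForm} for the frame \eqref{eq:IsometryFrame} and then use Lemma \ref{lem:FirstOrderIds} to eliminate the derivatives of $G$ (and of $H$ on the self dual side), with the remaining work being the symbolic algebra you describe. Your observations that only $\alpha,\kappa,\pi$ and their tilded counterparts survive, that $f$ and $F=f\bar f$ organize the simplification, and that $\partial_xH,\partial_yH$ persist only on the anti-self dual side all match the paper's treatment.
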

\begin{proof}
Use the spin coefficients in the form \eqref{eq:SpinCoeffExpandedForm} and Lemma \ref{lem:FirstOrderIds} to eliminate derivatives of $G$ and $H$.
\end{proof}
Note that there are no derivatives of $H$ in \eqref{eq:sdspincoeff5param}.
\begin{prop}
The non-vanishing self dual curvature components in the tetrad \eqref{eq:IsometryFrame}, taking \eqref{PsiRiemSign} into account, are given by
\begin{align}\label{eq:PsiValues}
\Psi_{2}{}=\frac{(1 + \nu) (x -  y)^3}{4 k (\nu x + y)^3},
&&
\Psi_{0}{}=\frac{3 f \Psi_{2}{}}{\bar{f}},
\end{align}
where $f$ is defined in \eqref{eq:fDef}. 
\end{prop}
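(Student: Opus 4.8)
The plan is to exploit that the tetrad \eqref{eq:IsometryFrame} is a Killing tetrad in the sense of Definition~\ref{def:KillingTetrad}, since $L^a$ is a combination of $\partial_\tau^a$ and $\partial_\phi^a$ with coefficients depending only on $(x,y)$. Hence \eqref{PsiIsomTetrad} gives $\Psi_1 = \Psi_3 = 0$, while \eqref{PsiRiemSign} gives $\Psi_4 = \bar\Psi_0$ and $\Psi_2 \in \Reals$. Thus only $\Psi_0$ and $\Psi_2$ remain to be determined, and for these the Ricci-flat simplifications \eqref{eq:PsiSimpForm} are available, expressing them entirely through the spin coefficients already computed in Proposition~\ref{lem:SpinCoeff5ParamFamily}.

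For $\Psi_2$ I would substitute the explicit $\kappa$ and $\pi$ of \eqref{eq:sdspincoeff5param} into $\Psi_2 = \kappa\bar\kappa - \pi\bar\pi$. Writing $\kappa = \frac{f(x-y)^{3/2}}{4\bar f\sqrt{2kH}}K$ and $\pi = \frac{(x-y)^{3/2}}{2\sqrt{2kH}}P$, with $K,P$ the bracketed expressions, the phase factors $f/\bar f$ cancel in $|\kappa|^2$, so that $\Psi_2 = \frac{(x-y)^3}{32kH}\bigl(|K|^2 - 4|P|^2\bigr)$. The claim then reduces to the purely algebraic identity $|K|^2 - 4|P|^2 = 8(1+\nu)H/(\nu x+y)^3$, which, after expanding $\partial_x X$ and $\partial_y Y$ as polynomials and clearing the square roots (they enter only through $X$ and $-Y$), becomes a rational identity in $x,y,\nu,a_0,\dots,a_4$ that matches the definition \eqref{CTMetricFunctions:H} of $H$.

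For $\Psi_0$ I would use the formula $\Psi_0 = -(M^a\partial_a - 3\bar\alpha - \tilde\alpha - 2\bar\pi)\kappa$ from \eqref{eq:PsiSimpForm}, with the directional derivative $M^a\partial_a = \frac{i(x-y)^{3/2}}{\sqrt{2kH}}(\sqrt X\,\partial_x + i\sqrt{-Y}\,\partial_y)$ read off from \eqref{eq:IsometryFrame}, and insert the explicit $\kappa,\alpha,\tilde\alpha,\pi$. Differentiating $\kappa$ produces terms in $\partial_x H$ and $\partial_y H$; these are precisely eliminated by the flow identity \eqref{eq:Hflow} of Lemma~\ref{lem:FirstOrderIds}, after which the remaining expression should collapse, on using $F = f\bar f$, to $3(f/\bar f)\Psi_2$. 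As a consistency check, the resulting formula gives $\Psi_0\Psi_4 = |\Psi_0|^2 = 9\Psi_2^2$, compatible with the factorization \eqref{eq:IJcondKilling}.

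The main obstacle is the $\Psi_0$ computation: the derivative terms generated by $M^a\partial_a\kappa$ mix $\partial H$, $\partial X$, $\partial Y$ and the poles at $\nu x + y = 0$, and matching the outcome to the compact form $3f\Psi_2/\bar f$ requires careful use of \eqref{eq:Hflow} together with the algebraic identities established in the proof of Lemma~\ref{lem:FirstOrderIds}. In practice this is best carried out with the symbolic setup (xAct/Spinframes) mentioned in Section~\ref{sec:prel}; the identity for $\Psi_2$, by contrast, is short enough to confirm by hand.
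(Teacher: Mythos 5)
Your proposal follows essentially the same route as the paper's proof, which simply combines the Ricci identities \eqref{eq:PsiSimpForm} with the explicit spin coefficients of Proposition~\ref{lem:SpinCoeff5ParamFamily} and uses Lemma~\ref{lem:FirstOrderIds} to eliminate the derivatives of $H$. Your reduction of $\Psi_2$ to the algebraic identity $|K|^2 - 4|P|^2 = 8(1+\nu)H/(\nu x+y)^3$ and your identification of \eqref{eq:Hflow} as the key ingredient for the $\Psi_0$ computation are both consistent with what the paper does.
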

\begin{proof}
Using Ricci identities in the form \eqref{eq:PsiSimpForm} and Proposition \ref{lem:SpinCoeff5ParamFamily} together with Lemma \ref{lem:FirstOrderIds} yields the result.
\end{proof}
Now we can state the following theorem, which is also the main ingredient in Theorem \ref{thm:main-intro}.
\begin{thm}  \label{thm:AlgNonSpec}
The Chen-Teo family of four-dimensional Ricci-flat Riemannian metrics \eqref{eq:ChenTeo5paramMetric} in the orientation given by \eqref{eq:orientation} has Weyl tensor of  type $O^+ I^-$ for $\nu=-1$, $D^+ D^-$ for $\nu=1$ and $D^+ I^-$ for $-1< \nu < 1$.
\end{thm}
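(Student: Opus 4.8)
The plan is to treat the two duality sides of the Weyl tensor separately, using the algebraic-speciality invariant together with its Killing-tetrad factorization in Lemma~\ref{rem:IJinvariants}, and then to read off the precise type from the eigenvalues of the curvature endomorphism \eqref{PsiMatrix}.

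First I would settle the self dual side, where the explicit values \eqref{eq:PsiValues} make everything transparent. Since $\Psi_2$ is real, \eqref{PsiRiemSign} gives $\Psi_4 = \overline{\Psi_0} = 3\bar f\Psi_2/f$, and together with $\Psi_0 = 3 f\Psi_2/\bar f$ this yields the single identity
\begin{align*}
\Psi_0 \Psi_4 = 9\Psi_2^2 .
\end{align*}
By \eqref{eq:IJcondKilling} the self dual invariant $I^3 - 6 J^2 = 512\,\Psi_0\Psi_4(\Psi_0\Psi_4 - 9\Psi_2^2)^2$ then vanishes identically, so $W^+$ is algebraically special for all parameter values. To pin down the type I would substitute $\Psi_1 = \Psi_3 = 0$ (from \eqref{PsiIsomTetrad}) into \eqref{PsiMatrix}: the $Z^1$ direction decouples with eigenvalue $4\Psi_2$, while the remaining $2\times 2$ block has trace $-4\Psi_2$ and determinant $4\Psi_2^2 - 4\Psi_0\Psi_4$, hence eigenvalues $-2\Psi_2 \pm 2\sqrt{\Psi_0\Psi_4} = -2\Psi_2 \pm 6\Psi_2$. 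Thus the three eigenvalues are $\{4\Psi_2, 4\Psi_2, -8\Psi_2\}$. Consequently $W^+$ has a single (necessarily zero) eigenvalue exactly when $\Psi_2 = 0$, i.e. $\nu = -1$, giving type $O^+$, and two distinct eigenvalues otherwise, giving type $D^+$ for $\nu \ne -1$.

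The anti self dual side is where the real work lies. I would compute $\tilde\Psi_2$ and $\tilde\Psi_0$ explicitly from the Killing-tetrad Ricci identities \eqref{eq:PsiSimpForm}, inserting the tilded spin coefficients $\tilde\pi, \tilde\kappa, \tilde\alpha$ of Proposition~\ref{lem:SpinCoeff5ParamFamily} and using Lemma~\ref{lem:FirstOrderIds} to clear the derivatives of $H$ and $G$. Setting $\tilde\Psi_4 = \overline{\tilde\Psi_0}$, the type of $W^-$ is governed by the anti self dual analog $\tilde I^3 - 6\tilde J^2 = 512\,\tilde\Psi_0\tilde\Psi_4(\tilde\Psi_0\tilde\Psi_4 - 9\tilde\Psi_2^2)^2$ of \eqref{eq:IJcondKilling}. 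The decisive contrast with the self dual side is that no identity forces $\tilde\Psi_0\tilde\Psi_4 = 9\tilde\Psi_2^2$: I would show that both $\tilde\Psi_0\tilde\Psi_4$ and $\tilde\Psi_0\tilde\Psi_4 - 9\tilde\Psi_2^2$ are nonvanishing at generic points of $\Domain$ for $-1 \le \nu < 1$, so that $\tilde I^3 - 6\tilde J^2 \ne 0$ and $W^-$ is type $I^-$. At $\nu = 1$ I expect the factor $\tilde\Psi_0\tilde\Psi_4 - 9\tilde\Psi_2^2$ to vanish, whereupon the eigenvalue computation above returns $\{4\tilde\Psi_2, 4\tilde\Psi_2, -8\tilde\Psi_2\}$ with $\tilde\Psi_2 \ne 0$, giving type $D^-$ and recovering the $D^+ D^-$ Kerr behavior.

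The main obstacle is the explicit evaluation of $\tilde\Psi_0$ and the factorization of $\tilde\Psi_0\tilde\Psi_4 - 9\tilde\Psi_2^2$ as a function of $(x,y)$ and the parameters, in a form that exhibits the vanishing at $\nu = 1$ as the single mechanism for degeneration while confirming that the quantity stays nonzero throughout $\Domain$ for $-1 \le \nu < 1$. The endpoint $\nu = -1$ needs a small separate check: there $W^+$ collapses to $O^+$ while $W^-$ must remain type $I^-$, so I would verify that $\nu = -1$ makes neither $\tilde\Psi_0\tilde\Psi_4$ nor $\tilde\Psi_0\tilde\Psi_4 - 9\tilde\Psi_2^2$ vanish.
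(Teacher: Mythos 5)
Your proposal follows essentially the same route as the paper: on the self dual side the identity $\Psi_0\Psi_4=9\Psi_2^2$ from \eqref{eq:PsiValues} combined with the factorization \eqref{eq:IJcondKilling} and the eigenvalues \eqref{PSIroots}, and on the anti-self dual side an explicit computation of $\tilde\Psi_0,\tilde\Psi_2$ showing degeneration exactly at $\nu=1$ and nonvanishing of $\tilde\Psi_0\overline{\tilde\Psi}_0-9\tilde\Psi_2^2$ elsewhere (the paper evaluates at the single point $x=(x_2+x_3)/2$, $y=(x_1+x_2)/2$, which suffices). The only cosmetic difference is that you check genericity rather than a single interior point, and you explicitly note the need to also verify $\tilde\Psi_0\tilde\Psi_4\neq 0$, which is a sound extra precaution.
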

\begin{proof}
The eigenvalues of \eqref{PsiMatrix} in a Killing tetrad are given by
\begin{align} \label{PSIroots}
\{4 \Psi_2,
-2(\Psi_2 - \sqrt{\Psi_0}\sqrt{\Psi_4}),
-2(\Psi_2 + \sqrt{\Psi_0}\sqrt{\Psi_4})\}.
\end{align}
From \eqref{eq:PsiValues} we notice that
\begin{align} \label{eq:typeDDegEigenvalueCondition}
\Psi_0 \overline\Psi_0  =  9 \Psi_2^2,
\end{align}
and hence due to \eqref{PsiRiemSign} this implies that $W^+_{abcd}$ is algebraically special by Remark \ref{rem:IJinvariants}. The eigenvalues simplify to 
\begin{align}
\{4\Psi_2, 4\Psi_2, -8 \Psi_2\}.
\end{align}
From \eqref{eq:PsiValues} we find $W^+_{abcd}=0$ for $\nu=-1$, which corresponds to the half-flat Gibbons-Hawking family of instantons.

The anti-self dual curvature components $\tilde{\Psi}_{0}{}, \tilde{\Psi}_{2}{}$ can be evaluated from \eqref{eq:PsiSimpForm} and Lemma~\ref{lem:SpinCoeff5ParamFamily}, eg.
\begin{align} \label{eq:Psi2tilde}
\frac{4 k H^3 \tilde{\Psi}_{2}{}}{(x -  y)^3 (\nu x + y)^3}={}&
 a_{1}{}^2 \Bigl(3 a_{3}{} (1 - \nu) x y + a_{4}{} \bigl(2 \nu x^3 + 6 (1 - 2 \nu) x^2 y + 6 (2 - \nu) x y^2 - 2 y^3\bigr)\Bigr)\nonumber\\
& +  a_{1}{} (1 - \nu) \Bigl(4 a_{0}{} (a_{2}{} + 6 a_{4}{} x y) + x^2 y^2 \bigl(-3 a_{3}{}^2 + 4 a_{4}{} (3 a_{2}{} + 2 a_{4}{} x y)\bigr)\Bigr) \nonumber\\
& - 4 a_{2}{} a_{3}{} (1 - \nu) x y (3 a_{0}{} + a_{4}{} x^2 y^2) - 8 a_{0}{} a_{3}{} (1 - \nu) (a_{0}{} + 3 a_{4}{} x^2 y^2)\nonumber\\
& + 2 a_{0}{} a_{3}{}^2 \bigl(- \nu x^3 - 3 (1 - 2 \nu) x^2 y + 3 (-2 + \nu) x y^2 + y^3\bigr)\nonumber\\
&-a_{1}{}^3 (1 - \nu) +  a_{3}{}^3 (1 - \nu) x^3 y^3.
\end{align}
An expression for $\tilde{\Psi}_{0}{}$ is too long to be displayed here, but it can easily be handled with computer algebra. For $\nu=1$ the expressions simplify to
\begin{align}
\tilde{\Psi}_{2}{}={}&- \frac{(a_{0}{} a_{3}{}^2 -  a_{1}{}^2 a_{4}{}) (x -  y)^3}{2 k (a_{1}{} -  a_{3}{} x y)^3}, 
&&
\tilde{\Psi}_{0}{}=\frac{3 f \tilde{\Psi}_{2}{}}{\bar{f}},
\end{align}
showing that $W^{-}_{abcd}\big{|}_{\nu=1}$ is algebraically special. This is the Plebanski-Demianski family of type $D^+ D^-$. 

For the remaining case it is sufficient to show that $W^{-}_{abcd}$ is not algebraically special at one point. We chose the center of the domain $\Domain$ at $x =(x_2+x_3)/2, y = (x_1+x_2)/2 $. Evaluating the non-trivial factor on the right hand side of the anti-self dual version of \eqref{eq:IJcondKilling} leads to
\begin{align} \label{eq:IJtilde}
\tilde{\Psi}_{0}{} \overline{\tilde{\Psi}}_{0}{} - 9 \tilde{\Psi}_{2}^2={}&36 B^{-1} (1 - \nu) (x_{1}{} -  x_{2}{})^2 (x_{1}{} + x_{2}{} - 2 x_{3}{}) (x_{1}{} -  x_{3}{})^6 (2 x_{1}{} -  x_{2}{} -  x_{3}{}) (x_{2}{} -  x_{3}{})^2 \times \nonumber \\
& (x_{1}{} + x_{2}{} - 2 x_{4}{}) (x_{2}{} + x_{3}{} - 2 x_{4}{}) ( x_{1}{} x_{4}{} - x_{2}{} x_{3}{}) (x_{1}{} x_{3}{} -  x_{2}{} x_{4}{}) (x_{1}{} x_{2}{} -  x_{3}{} x_{4}{}),
\end{align}
with denominator
\begin{align}
B={}&k^2 \biggl(x_{1}{}^3 (x_{2}{} + x_{3}{}) -  \nu x_{2}{} x_{3}{} (x_{2}{} + x_{3}{})^2 -  x_{2}{} (x_{2}{} - 3 x_{3}{}) \bigl(( \nu -1) x_{2}{} + \nu x_{3}{}\bigr) x_{4}{} \nonumber\\
&+ x_{1}{}^2 \bigl(2 x_{2}{} (x_{2}{} -  x_{3}{}) - 3 (x_{2}{} + x_{3}{}) x_{4}{}\bigr) + x_{1}{} \Bigl(x_{2}{}^3 -  \nu x_{3}{}^2 (x_{3}{} - 3 x_{4}{}) \nonumber \\
&+ 2 x_{2}{} x_{3}{} (\nu x_{3}{} + 5 x_{4}{} - 5 \nu x_{4}{}) + x_{2}{}^2 \bigl(3 ( \nu -1) x_{3}{} + ( 3 \nu -2) x_{4}{}\bigr)\Bigr)\biggr)^5.
\end{align}
Assuming the $x_i$ to be ordered and $\nu \neq 1$, there are only three exceptional values $x_4 \in \{\frac{x_2 x_3}{x_1},\frac{x_1 x_3}{x_2},\frac{x_1 x_2}{x_3}\}$ for which \eqref{eq:IJtilde} vanishes. However, one can check that the function $H$ vanishes at a turning point for each of the values, which would correspond to a curvature singularity.
\end{proof}
We note that for $\nu\neq -1$, both $W^+_{abcd} $ and  $W^-_{abcd} $ are non-zero, so the metric is non-Kähler.

\begin{remark}\label{rem:AlgSpec} 
For coordinates $(x,y)$ such that at least one of the Killing vectors vanishes, i.e. $XY=0$ and in particular on the rods $\Rod_i, i=1, \dots, 4$,  the Weyl tensor is of  type $D^+ D^-$ for $-1< \nu \leq 1$ and $O^+ D^-$ for $\nu=-1$. We find
\begin{align}\label{eq:Psi0tildeRod}
\tilde{\Psi}_{0}{}= 
\begin{cases}
3 \tilde{\Psi}_{2}{} &\text{ if } X = 0, \\
-3 \tilde{\Psi}_{2}{} &\text{ if } Y = 0, 
\end{cases}
\end{align}
which implies that that $W^{-}_{abcd}\big{|}_{XY=0}$ is algebraically special.
\end{remark}

For a geometry of Petrov type $D^+$, there exist adapted tetrads $\{\hat{L}^a, \hat{N}^a, \hat{M}^a, \hat{P}^a\}$ , such that the transformed curvature satisfies $\hat{\Psi}_{0}{}= 0, \hat{\Psi}_{1}{}= 0$.
\begin{lemma}
Starting from the tetrad \eqref{eq:IsometryFrame}, an adapated tetrad is given by
\begin{subequations} \label{eq:AdaptedTetrad}
\begin{align}
\hat{L}_{a}={}& \frac{1}{\sqrt{2}} \left( L_{a} + e^{i \psi} P_{a} \right),\\
\hat{N}_{a}={}& \frac{1}{\sqrt{2}} \left( N_{a} - e^{- i \psi} M_{a} \right),\\
\hat{M}_{a}={}& \frac{1}{\sqrt{2}} \left( M_{a} + e^{i \psi} N_{a} \right),\\
\hat{P}_{a}={}& \frac{1}{\sqrt{2}} \left( P_{a} -  e^{-i \psi} L_{a} \right),
\end{align}
\end{subequations} 
which is \eqref{TetradRotation} with parameters $\chi = \pi/4, \theta = 0$ and
\begin{align} \label{eq:psiTrafo}
e^{i \psi}=\left(\frac{\Psi_{0}{}}{\bar{\Psi}_{0}{}}\right)^{1/4} = i \sqrt{\frac{f}{\bar{f}}}.
\end{align}
The self dual curvature components take the form
\begin{align}\label{eq:Psihat}
\hat{\Psi}_{0}{}= 0, &&
\hat{\Psi}_{1}{}= 0, &&
\hat{\Psi}_{2}{}= - 2 \Psi_{2}{}.
\end{align}
\end{lemma}
\begin{proof}
Starting from a Killing tetrad, the relevant curvature components of \eqref{PsiTrafo} simplify to
\begin{subequations} 
\begin{align}
\hat{\Psi}_{0}{}={}&e^{4i \theta} \cos^4\chi \Psi_{0}{}
 + 6 e^{2i (\psi + \theta)} \cos^2\chi \Psi_{2}{} \sin^2\chi
 + e^{4i \psi} \overline{\Psi}_{0}{} \sin^4\chi ,\\
\hat{\Psi}_{1}{}={}& -  \frac{e^{3i \theta} \Psi_{0}{} \sin(2 \chi)}{4 e^{i \psi}}
 -  \frac{e^{3i \theta} \cos(2 \chi) \Psi_{0}{} \sin(2 \chi)}{4 e^{i \psi}}
 + \tfrac{3}{2} e^{i (\psi + \theta)} \cos(2 \chi) \Psi_{2}{} \sin(2 \chi)\nonumber\\
& + \frac{e^{3i \psi} \overline{\Psi}_{0}{} \sin(2 \chi)}{4 e^{i \theta}}
 -  \frac{e^{3i \psi} \cos(2 \chi) \overline{\Psi}_{0}{} \sin(2 \chi)}{4 e^{i \theta}},\\
\hat{\Psi}_{2}{}={}&\frac{e^{2i \theta} \Psi_{0}{}}{8 e^{2i \psi}}
 -  \frac{e^{2i \theta} \cos(4 \chi) \Psi_{0}{}}{8 e^{2i \psi}}
 + \tfrac{1}{4} \Psi_{2}{}
 + \tfrac{3}{4} \cos(4 \chi) \Psi_{2}{}
 + \frac{e^{2i \psi} \overline{\Psi}_{0}{}}{8 e^{2i \theta}} -  \frac{e^{2i \psi} \cos(4 \chi) \overline{\Psi}_{0}{}}{8 e^{2i \theta}}.
\end{align}
\end{subequations} 
We want the first two components to vanish and this can be achieved by substituting $\Psi_2$ using \eqref{eq:typeDDegEigenvalueCondition} and setting $\chi = \pi/4, \theta = 0$. The equations simplify to
\begin{subequations} 
\begin{align}
\hat{\Psi}_{0}{}={}&\tfrac{1}{4} \Psi_{0}{}
 + \tfrac{1}{4} e^{4i \psi} \bar{\Psi}_{0}{}
 -  \tfrac{1}{2} e^{2i \psi} (\Psi_{0}{} \bar{\Psi}_{0}{})^{1/2},\\
\hat{\Psi}_{1}{}={}&- \frac{\Psi_{0}{} -  e^{4i \psi} \bar{\Psi}_{0}{}}{4 e^{i \psi}},\\
\hat{\Psi}_{2}{}={}&\frac{2 \Psi_{0}{} + 2 e^{4i \psi} \bar{\Psi}_{0}{} + \tfrac{4}{3} e^{2i \psi} (\Psi_{0}{} \bar{\Psi}_{0}{})^{1/2}}{8 e^{2i \psi}}.
\end{align}
\end{subequations} 
and yield \eqref{eq:Psihat} with the choice \eqref{eq:psiTrafo}.
\end{proof}

Using \eqref{eq:W+Z}, the self dual Weyl curvature is now of the form 
\begin{align} 
W^+ = -\hat{\Psi}_2 (4 \hat Z^1 \odot \hat Z^1 + 2 \hat Z^0 \odot \hat Z^2),
\end{align} 
so that with $|W^+|^2 = W^+_{abcd} W^+{}^{abcd}$, we have 
\begin{align} \label{eq:W+norm}
|W^+|^2 = 24 \hat{\Psi}_2^2 = 96 \Psi_2^2.
\end{align} 

\begin{remark}
The middle self dual 2-form \eqref{eq:ZDef} in the adapted tetrad \eqref{eq:AdaptedTetrad} takes the simple form
\begin{align} \label{eq:Z1hat}
\hat{Z}^{1}_{ab}={}&\alpha_{[a}\d\tau_{b]}
 + \beta_{[a}\d\phi_{b]}.
\end{align}
where
\begin{align}
\alpha_{a}={}& \frac{i \sqrt{k} (x \d y_{a} - y \d x_{a})}{(x -  y)^2},\\
\beta_{a}={}& \frac{i \sqrt{k} \Bigl(  \bigl(G x (x -  y) + H X y\bigr)\d y_{a} - \bigl(G y (x -  y)  + H Y x\bigr)\d x_{a} \Bigr) }{F (x -  y)^3}.
\end{align}
\end{remark}

\begin{remark}
By the Goldberg-Sachs theorem \cite{2009arXiv0911.3364G}, algebraic degeneracy of the Weyl tensor is equivalent to vanishing of the spin coefficients $\kappa$ and $\sigma$ in an adapted tetrad. They are given by
\begin{align}
\kappa = M^a L^b \nabla_b L_a, \qquad 
\sigma = M^a M^b \nabla_b L_a.
\end{align}
Performing a tetrad rotation \eqref{TetradRotation} with 
\begin{align}
\chi=\pi/4, \qquad
\theta = 0, \qquad
  \partial_\tau \psi = 0 =  \partial_\phi \psi, 
\end{align}
 and using \eqref{VanishingSpinCoeffIsoTetrad} leads to
\begin{align}
\hat\kappa 
=& \hat M^a \hat L^b \nabla_b \hat L_a 
=\frac{\kappa}{2 \sqrt{2}} -\frac{1}{4 \sqrt{2}}  \bigl(
P^a \partial_a - 4 \alpha - 2\pi \bigr) e^{2i \psi}, \\ 
\hat\sigma 
=& \hat M^a \hat M^b \nabla_b \hat L_a 
=\frac{e^{3i \psi} \nu }{2 \sqrt{2}}+
\frac{e^{3i \psi} }{4 \sqrt{2}}
\bigl(M^a \partial_a + 4 \beta + 2 \tau \bigr) e^{-2 i \psi}.
\end{align}

Define a function $K$ by
\begin{align}
 K = Y \partial_y \left( \frac{Gx}{F} + \frac{HXy}{F(x-y)} \right) + X \partial_x \left( \frac{Gy}{F} + \frac{HYx}{F(x-y)} \right).
\end{align}
We do not have a geometric interpretation of this function, but by Lemma \ref{lem:FirstOrderIds}, we find 
\begin{align} \label{Keq0}
 K = 0.
\end{align}
The choice \eqref{eq:psiTrafo} for $\psi$ leads to
\begin{align}
\hat\kappa = \frac{f K (x-y)^{5/2}}{4i \sqrt{k X (-Y)H^3}}, \qquad
\hat\sigma =  \sqrt{\frac{f}{\bar f}}\frac{f K (x-y)^{5/2}}{4 \sqrt{k X (-Y)H^3}},
\end{align}
which for \eqref{Keq0} shows the algebraic degeneracy.
\end{remark}

Using the results of this section together with those of Penrose and Walker \cite{1970CMaPh..18..265W} and Derdzinski \cite{MR707181}, we can conclude the following facts
\begin{cor} \label{cor:details}
Let  $g_{ab}$ be given by \eqref{eq:ChenTeo5paramMetric} and define 
\begin{align} \label{eq:udef} 
u ={}& - \frac{x -  y}{ \nu x + y}.
\end{align} 
The following statements are locally valid. 

\begin{enumerate}
\item If $\nu  \ne -1$, the function $u$ given by \eqref{eq:udef} satisfies 
\begin{align} \label{eq:uTowP}
u = c |W^+|^{1/3}, \quad \text{with } 
c = \left(\frac{k}{\sqrt{6}(1+\nu)}\right)^{1/3}
\end{align} 
In particular for $\nu = -1$, $W^+_{abcd} = 0$ and $u=1$. 
\item An integrable complex structure is given by
\begin{align}
 J_a{}^b = 2 i g^{bc} \hat{Z}^1_{ac},
\end{align}
with  $\hat{Z}^1_{ac}$ given by \eqref{eq:Z1hat}.
\item The conformally transformed metric
\begin{align}
\bar g_{ab} = u^2 g_{ab} 
\end{align} 
is K\"ahler, with K\"ahler form 
\begin{align} 
\bar \omega_{ab} = J_{a}{}^c \, \bar g_{cb}.
\end{align} 
\item $Y_{ab} =  -\frac{2i}{3} u^{-1} \hat Z^1_{ab}$ is conformal Killing-Yano and
\begin{align}
\xi^a = g^{ac}\nabla^bY_{cb} = \bar{g}^{ac}J_{c}{}^b \nabla_b u = \frac{1+\nu}{\sqrt{k}} (\partial_\tau)^a .
\end{align}  
In particular, the Killing field $\xi^a$ is Hamiltonian and $\bar g_{ab}$ is extremal K\"ahler.
\item 
For $\nu \neq -1$ the scalar curvature of $\bar g_{ab}$ is 
\begin{align} 
\text{scal}_{\bar g} = u,
\end{align} 
while for $\nu = -1$, $\bar g = g$ and $\text{scal}_{\bar g} = 0$.  
\end{enumerate} 
\end{cor}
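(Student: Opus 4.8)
The strategy is to treat the corollary as the specialization of two general structure theorems — Derdzinski's theorem on Einstein four-manifolds with degenerate self-dual Weyl tensor \cite{MR707181} and the Penrose--Walker construction for type-$D$ geometries \cite{1970CMaPh..18..265W} — to the explicit data already assembled: the value of $\Psi_2$ in \eqref{eq:PsiValues}, the norm formula \eqref{eq:W+norm}, the adapted tetrad \eqref{eq:AdaptedTetrad}, and the explicit middle two-form $\hat Z^1$ of \eqref{eq:Z1hat}. The crucial hypothesis unlocking all five statements is the algebraic speciality (type $D^+$) of $W^+$ proved in Theorem \ref{thm:AlgNonSpec}; the qualifier ``locally valid'' reflects that the constructions live on the open set where $W^+\neq 0$ and require local sign/branch choices for cube and square roots such as that in \eqref{eq:psiTrafo}. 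Since the Chen--Teo metric is Ricci-flat, hence Einstein, Derdzinski's theorem applies directly.

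For (1) I would combine $|W^+|^2 = 96\,\Psi_2^2$ from \eqref{eq:W+norm} with the explicit $\Psi_2=\tfrac{(1+\nu)(x-y)^3}{4k(\nu x+y)^3}$ and the definition \eqref{eq:udef}, observing that $u^3 = -(x-y)^3/(\nu x+y)^3 = -\tfrac{4k}{1+\nu}\Psi_2$ and $|W^+|=4\sqrt6\,|\Psi_2|$. Taking cube roots and fixing the local sign (so that $u>0$ where $\Psi_2<0$) matches the constant $c=(k/(\sqrt6(1+\nu)))^{1/3}$ and gives $u=c|W^+|^{1/3}$; the case $\nu=-1$ is immediate since $\Psi_2=0$ forces $W^+=0$ (type $O^+$) and $u=-(x-y)/(-x+y)=1$. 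For (2) I would first verify that $J_a{}^b=2i\,g^{bc}\hat Z^1_{ac}$ is a genuine almost-complex structure: from the tetrad normalizations \eqref{RiemSignTetrad} and the algebraic form of $\hat Z^1$ one computes $\hat Z^1_a{}^c\hat Z^1_c{}^b=\tfrac14\delta_a^b$, whence $J^2=-\mathrm{Id}$, while reality of $J$ follows from $\bar{\hat Z}^1=-\hat Z^1$. Integrability is then equivalent, by the Goldberg--Sachs theorem, to $\hat\kappa=\hat\sigma=0$ in the adapted tetrad, which is precisely the identity $K=0$ of \eqref{Keq0} recorded in the preceding remark (equivalently, it is Derdzinski's statement that the eigenform of a degenerate $W^+$ on an Einstein manifold is integrable).

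For (3)--(5) I would invoke Derdzinski's theorem: on $\{W^+\neq 0\}$ an oriented Einstein four-manifold with degenerate $W^+$ is conformal to a Kähler metric, the conformal factor being a fixed normalization of $|W^+|^{2/3}$, and the resulting Kähler metric is extremal with scalar curvature a fixed multiple of the distinguished eigenvalue of $W^+$. By (1) the conformal factor is $u^2$, which yields (3) with Kähler form $\bar\omega_{ab}=J_a{}^c\bar g_{cb}=2iu^2\hat Z^1_{ab}$, and the precise normalization of $c$ converts ``a multiple of $|W^+|^{1/3}$'' into the clean statement $\mathrm{scal}_{\bar g}=u$ of (5). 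For the conformal Killing--Yano property in (4) I would use the conformal covariance of the CKY equation, namely that a two-form $Y$ is CKY for $g$ if and only if $u^3Y$ is CKY for $\bar g=u^2g$: here $u^3Y=-\tfrac13\bar\omega$ is a constant multiple of the parallel Kähler form, hence Yano for $\bar g$, so $Y$ is CKY for $g$. That $\xi^a=g^{ac}\nabla^bY_{cb}$ is Killing is the Penrose--Walker mechanism (the divergence of a CKY two-form on an Einstein manifold is a Killing field), and the alternative description $\xi^a=\bar g^{ac}J_c{}^b\nabla_b u$ together with $\mathrm{scal}_{\bar g}=u$ is exactly the statement that $\xi$ is the Hamiltonian Killing field $J\,\mathrm{grad}_{\bar g}(\mathrm{scal}_{\bar g})$ witnessing extremality.

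The main obstacle is the explicit identification $\xi^a=\tfrac{1+\nu}{\sqrt k}(\partial_\tau)^a$ in (4), together with pinning down the exact numerical constants — the factor $-\tfrac{2i}{3}$ in $Y$, the resulting $-\tfrac13$ above, and the constant that produces $\mathrm{scal}_{\bar g}=u$ rather than merely a multiple of $u$. These require feeding the explicit $u$, $\hat Z^1$, and the connection coefficients of Proposition \ref{lem:SpinCoeff5ParamFamily} into the divergence $\nabla^bY_{cb}$ and into Derdzinski's scalar-curvature formula and performing the contractions; this is a finite but delicate computation, best carried out with the computer-algebra setup of Section \ref{sec:prel}, whose payoff is that the abstractly defined Killing field coincides precisely with the coordinate field $\partial_\tau$.
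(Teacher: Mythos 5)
Your proposal is correct and follows essentially the same route as the paper, which likewise derives the corollary by specializing Derdzinski's theorem and the Penrose--Walker construction to the explicit data ($\Psi_2$ from \eqref{eq:PsiValues}, $|W^+|^2=96\Psi_2^2$, the adapted tetrad, $\hat Z^1$, and the Goldberg--Sachs degeneracy via $K=0$); indeed the paper gives no more detail than the citation, so your write-up is if anything more explicit. Your sign bookkeeping in (1) and the conformal-weight argument reducing the CKY property in (4) to parallelism of $\bar\omega$ are both sound.
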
 

\begin{remark}
We note that the relation \eqref{eq:uTowP} is such that the construction extends to the $W^+_{abcd} = 0$ case. 
The analogous choice for the Kerr solution would be $u = r- a \cos\theta$ which is also non-trivial in the mass $ M \to 0$ limit.
\end{remark}

\section{The Chen-Teo Instanton} \label{sec:CTInst} 

We now restrict the 5-parameter family of metrics to the 2-parameter family of Chen-Teo instantons and discuss its regularity and asymptotics. Although these properties have been discussed in previous works, they are sufficiently important to warrant a direct and independent treatment, which we provide here. Section~\ref{sec:ProofMainThm} collects the results to prove the main theorem.

Imposing the AF condition, ie. $p = 0, q = 1$, the rod structure discussed in Remark~\ref{rem:n3rodstructure} simplifies to
\begin{align}
v_0 = (0,1), \qquad
v_1 = (1,0), \qquad
v_2 = (-1,1), \qquad
v_3 = (0,1).
\end{align}
Equating this to the rod structure \eqref{eq:vi} of the Chen-Teo metric leads to the following set of four conditions on the parameters,
\begin{subequations}\label{eq:AFRegCond}
\begin{align}
-1={}&\frac{(x_{1}{} -  x_{2}{}) (x_{1}{} -  x_{4}{}) \bigl(\nu^2 x_{2}{} x_{3}{} (x_{2}{} + x_{3}{}) + x_{1}{} (x_{2}{} + x_{3}{}) x_{4}{} + 2 \nu x_{2}{} x_{3}{} (x_{1}{} + x_{4}{})\bigr)}{(x_{1}{} + \nu x_{2}{})^2 (x_{3}{} -  x_{4}{}) (x_{1}{} x_{2}{} -  x_{3}{} x_{4}{})},\\
1={}&\frac{(x_{1}{} -  x_{2}{}) (x_{1}{} + \nu x_{3}{})^2 (x_{2}{} -  x_{4}{}) (x_{1}{} x_{3}{} -  x_{2}{} x_{4}{})}{(x_{1}{} + \nu x_{2}{})^2 (x_{1}{} -  x_{3}{}) (x_{3}{} -  x_{4}{}) (x_{1}{} x_{2}{} -  x_{3}{} x_{4}{})},\\
0={}&p,\\
1={}&q,
\end{align}
\end{subequations}
with $p,q$ given by \eqref{eq:pqCT}.
\begin{lemma}
The conditions \eqref{eq:AFRegCond} imply the relations
\begin{align}
x_{1}{}= - \frac{2 \nu^2 x_{2} -  \sqrt{- 2\nu (1 + \nu)^2 x_{2}{}^2}}{2 + 4 \nu},\qquad
x_{3}{}= - \frac{2 x_{2} +  \sqrt{- 2\nu (1 + \nu)^2 x_{2}{}^2}}{4 \nu + 2 \nu^2},\qquad
x_{4}{}= 0.
\end{align}
This can be parametrized by $\xi$ in the form
\begin{align}\label{eq:nuxxi} 
\nu = -2\xi^2, \quad x_1 = - \frac{\xi(1-2\xi+2\xi^2)x_2}{1-2\xi}, \quad x_3 = \frac{(1-2\xi+2\xi^2)x_2}{4\xi^2(1-\xi)}, \quad x_4 = 0.
\end{align} 
For $x_2 < 0$ and
\begin{align}\label{eq:xicond}  
\xi \in (1/2,1/\sqrt{2})
\end{align} 
we have
\begin{align} \label{eq:xiordering}
 x_1 < x_2 < x_3< 0 = x_4.
\end{align}
\end{lemma}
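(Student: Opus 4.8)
The plan is to convert the regularity/AF conditions \eqref{eq:AFRegCond} into an explicit polynomial system and solve it. First I would write out the rod-structure vectors $v_2 = A\ell_2$ and $v_3 = A\ell_3$ by inserting the normalized rod vectors \eqref{eq:ellk} into the transformation \eqref{eq:vi}; since $A$ is built so that $\ell_0 \mapsto (0,1)$ and $\ell_1 \mapsto (1,0)$, the only nontrivial data are $v_2$ and $v_3$. A useful preliminary observation is that $k$ and $a_4$ cancel in $v_2, v_3$ (consistent with the earlier remark that $p,q$ do not depend on $k$), so that \eqref{eq:AFRegCond} reduces to four polynomial equations in the five quantities $x_1,x_2,x_3,x_4,\nu$. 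Here the condition $p = -v_3[1] = 0$ is, by the factored form of $p$, equivalent to the vanishing of the asymptotic nut charge $\mathbf{n}$ (all other factors being nonzero in the ordered regime), which is precisely the statement that the bundle at infinity is trivial, i.e. AF.

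Next I would solve this system for $x_1,x_3,x_4$, regarding $x_2$ (a scale) and $\nu$ as free. Eliminating variables — by hand through resultants, or with the computer-algebra setup already used in the paper — should produce $x_4 = 0$ together with a single quadratic constraint relating each of $x_1, x_3$ to $x_2$ and $\nu$; its two roots give exactly the two radical expressions in the statement, with the radical $\sqrt{-2\nu(1+\nu)^2 x_2^2}$, one root being $x_1$ and the other $x_3$. To remove the square root I would introduce $\nu = -2\xi^2$, under which $-2\nu(1+\nu)^2 = 4\xi^2(1-2\xi^2)^2$ is a perfect square, so that for $\xi>0$ and $x_2<0$ one has $\sqrt{-2\nu(1+\nu)^2 x_2^2} = -2\xi(1-2\xi^2)x_2$. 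Substituting and using the factorizations $4\xi^3 - 2\xi^2 + 1 = (2\xi+1)(2\xi^2-2\xi+1)$ and $1-\xi+2\xi^3 = (1+\xi)(1-2\xi+2\xi^2)$ in the respective numerators then collapses the radical expressions to the rational forms \eqref{eq:nuxxi}.

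Finally, for the ordering \eqref{eq:xiordering} under \eqref{eq:xicond}, I would work directly from \eqref{eq:nuxxi} and factor the relevant differences:
\begin{align*}
x_1 - x_2 = x_2\,\frac{(1-\xi)(2\xi^2-1)}{1-2\xi}, \qquad
x_3 - x_2 = x_2\,\frac{(2\xi-1)(2\xi^2-1)}{4\xi^2(1-\xi)}, \qquad
x_3 = x_2\,\frac{1-2\xi+2\xi^2}{4\xi^2(1-\xi)}.
\end{align*}
On the interval $\xi \in (1/2, 1/\sqrt2)$ one has $1-2\xi<0$, $2\xi-1>0$, $2\xi^2-1<0$, $1-\xi>0$ and $1-2\xi+2\xi^2 = 2(\xi - \tfrac12)^2 + \tfrac12 > 0$; combined with $x_2 < 0$ these sign determinations give $x_1 - x_2 < 0$, $x_3 - x_2 > 0$ and $x_3 < 0 = x_4$, which is exactly \eqref{eq:xiordering}. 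The main obstacle is the elimination step: verifying that the four equations \eqref{eq:AFRegCond} are consistent for three unknowns and extracting the clean solution $x_4 = 0$ together with the quadratic branch structure for $(x_1,x_3)$, rather than the algebraically routine sign analysis of the last paragraph.
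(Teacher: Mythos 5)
The paper states this lemma without any proof (it is presented as a computer-algebra fact obtained by imposing $v_2=(-1,1)$, $v_3=(0,1)$ on \eqref{eq:vi}), so there is no argument of the authors to compare against; judged on its own, your proposal takes the only natural route and everything you actually compute checks out. The observation that $k$ and $a_4$ cancel from the $v_i$ is correct, since each component of $v_i$ is a ratio of $2\times 2$ determinants built from the $\ell_j$ of \eqref{eq:ellk}. The identities $-2\nu(1+\nu)^2=4\xi^2(1-2\xi^2)^2$, $4\xi^3-2\xi^2+1=(2\xi+1)(2\xi^2-2\xi+1)$ and $1-\xi+2\xi^3=(1+\xi)(1-2\xi+2\xi^2)$ are all correct, and they do collapse the radical expressions to \eqref{eq:nuxxi}; your branch choice $\sqrt{-2\nu(1+\nu)^2x_2^2}=-2\xi(1-2\xi^2)x_2$ correctly uses $x_2<0$ and $1-2\xi^2>0$ on \eqref{eq:xicond}. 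The factorizations of $x_1-x_2$ and $x_3-x_2$ and the sign analysis on $\xi\in(1/2,1/\sqrt{2})$ (using $1-2\xi<0$, $2\xi^2-1<0$, $1-\xi>0$, $1-2\xi+2\xi^2>0$, $x_2<0$) correctly give \eqref{eq:xiordering}.

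The one soft spot is the step you yourself flag as the crux: you assert, but do not exhibit, that elimination in the four equations $v_2=(-1,1)$, $v_3=(0,1)$ produces $x_4=0$ together with a single quadratic in the remaining variables whose two roots are the displayed expressions for $x_1$ and $x_3$. Since the lemma solves for only three quantities $x_1,x_3,x_4$ from four equations, one also has to check that the fourth equation is implied by the other three (equivalently, that the system is consistent along the claimed two-parameter family in $(x_2,\nu)$). This is a finite rational computation and your method would certainly carry it out, but as written the first half of the lemma is a described procedure rather than a verification, whereas the second half (reparametrization and ordering) is fully proved. To make the argument self-contained one should either display the relevant resultants or verify by direct substitution that \eqref{eq:nuxxi} satisfies all four components of \eqref{eq:AFRegCond}; the latter is probably the cleanest way to close the gap, and it simultaneously establishes existence and the consistency of the overdetermined system.
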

\begin{definition}  \label{def:C-T-inst}
The Chen-Teo AF gravitational instanton with parameters $k>0, \xi \in (1/2,1/\sqrt{2})$ is given by \eqref{eq:ChenTeo5paramMetric}  with $a_4 > 0$, $x_2 < 0$ and with $x_1, x_3, x_4$ given by \eqref{eq:nuxxi}. The coordinates $(x,y)$ take values in $\bar\Omega\setminus\{(x_2,x_2)\}$ with $\Omega$ given in \eqref{eq:OmegaDef}. The range for the Killing coordinates is determined, as described in Section \ref{sec:rods}, by the rod structure \eqref{eq:vi} corresponding to the normalized rods $\ell_k$ given by \eqref{eq:ellk}. 
\end{definition}

\begin{remark} 
\begin{enumerate}
\item Taking into account the symmetries of \eqref{eq:ChenTeo5paramMetric}, there are other sets of parameters representing the same instanton, cf. \cite[\S IV.C]{2015PhRvD..91l4005C}. 
\item The Chen-Teo metric depends only on the ratios of the roots $\{x_1, \dots, x_4\}$, and hence $x_2<0$ can, without loss of generality, be fixed to any value.  
\end{enumerate}
\end{remark}

\subsection{Regularity} \label{sec:regularity} 
We shall now discuss the regularity of the Chen-Teo instanton. In order for $g_{ab}$ given by \eqref{eq:ChenTeo5paramMetric} to be smooth and non-degenerate for $(x,y) \in \Domain$, it is sufficient that $X>0,Y<0,H>0$ in $\Domain$. Recall that
\begin{align} 
\Domain = \{ (x,y) \in (x_2,x_3)\times (x_1,x_2) \} 
\end{align} 
so that for $(x,y) \in \Domain$ and with \eqref{eq:xiordering}
 we have $X > 0$, $Y < 0$ and hence $F>0$. We have that $\Domain$ lies in the third quadrant in the $(x,y)$ plane, below the diagonal, cf. Figure\ref{fig:domain}. The closure $\bar \Domain$ contains the rods $\Rod_k$ given in \eqref{eq:Rodk5param} as boundary segments and the three turning points $n_i$ given in \eqref{eq:nuts} are located at the corners of $\Domain$, while the corner of $\Domain$ on the diagonal, $(x_2, x_2)$, represents the asymptotic end of the instanton.

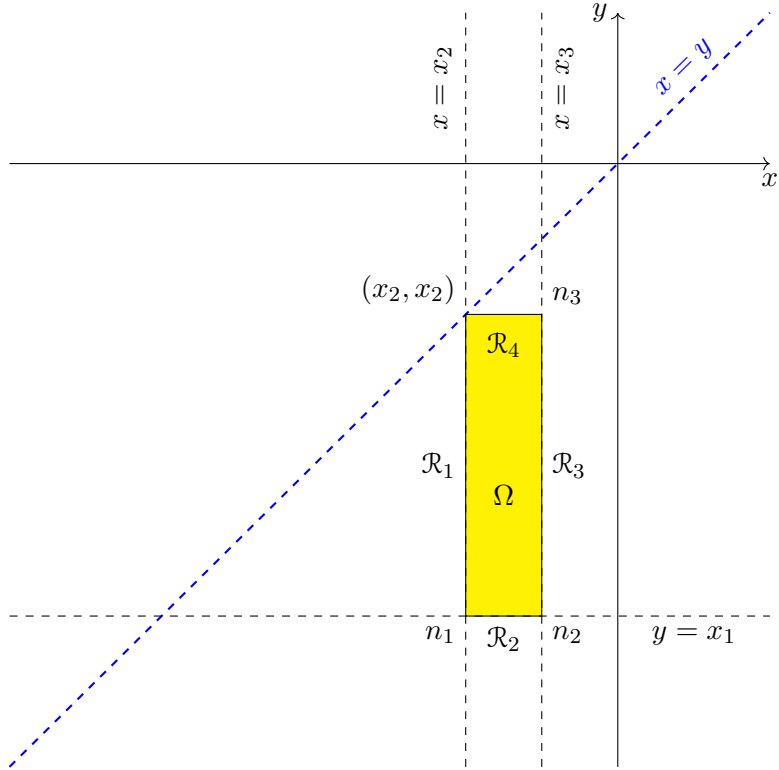
\begin{figure}[t!]
\begin{tikzpicture}[scale=2]

\draw[thin, black,->] (-4,0) -- (1,0)node[anchor=north] {$x$};
\draw[thin,black,->] (0,-4) -- (0,1) node[anchor=east] {$y$};
\draw[thick, blue, dashed](-4,-4) to node[pos=.95,anchor=south east, rotate=45]{$x=y$} (1,1);

\filldraw[fill=yellow,draw=black] (-1,-1) to node[anchor=east]{$\Rod_1$}  (-1,-3) to node[anchor=north]{$\Rod_2$} (-.5,-3) to node[anchor=west]{$\Rod_3$} (-.5,-1) to node[anchor=north,yshift=-1mm]{$\Rod_4$} cycle  ;

\draw (-1,-1) node[anchor=south east] {$(x_2, x_2)$ };
\draw[thin, black, dashed](-4,-3) to node[anchor=north,pos=0.9]{$y = x_1$} (1,-3) ;

\draw (-.75,-2.2) node {$\Domain$};

\draw[thin, black, dashed](-.5,-4) to node[pos=0.9,xshift=3mm,rotate=90]{$x = x_3$} (-.5,1) ;
\draw[thin, black, dashed](-1,-4) to node[pos=0.9,xshift=-3mm,rotate=90]{$x = x_2$} (-1,1) ;

\draw (-1,-3) node[anchor=north east]{$\Nut_1$};
\draw (-.5,-3) node[anchor=north west]{$\Nut_2$};
\draw (-.5,-1) node[anchor=south west]{$\Nut_3$};

\end{tikzpicture} 

\caption{The domain $\Domain$}
\label{fig:domain}
\end{figure}

\begin{lemma} \label{lem:Hg0}
 $H > 0$ in $\bar\Domain$.
\end{lemma}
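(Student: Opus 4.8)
The plan is to exploit the instanton parametrization, where $x_4 = 0$, so that \eqref{eq:aToxEqs} gives $a_0 = 0$, $a_1 = -a_4 x_1 x_2 x_3$ and $a_3 = -a_4(x_1+x_2+x_3)$. With this, \eqref{CTMetricFunctions:H} factors as $H = (\nu x + y)\, Q$, where
\[
Q = (\nu x - y)(a_1 - a_3 x y) + 2(1-\nu)\,a_4\, x^2 y^2 .
\]
Throughout $\bar\Domain$ one has $x_1 \le y \le x_2 \le x \le x_3 < 0$, while $\nu = -2\xi^2 \in (-1,0)$, $1-\nu > 0$ and $a_4 > 0$ by \eqref{eq:nuxxi} and Definition~\ref{def:C-T-inst}. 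I will show that both factors are strictly negative, so that $H > 0$.

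The first factor is controlled by an elementary monotonicity bound. Since $x \ge x_2$ and $\nu < 0$, we get $\nu x \le \nu x_2$, and since $y \le x_2$ this yields $\nu x + y \le (\nu+1)x_2 = (1-2\xi^2)x_2 < 0$, the last inequality because $x_2 < 0$ while $2\xi^2 < 1$ thanks to $\xi < 1/\sqrt 2$. Hence $\nu x + y < 0$ on all of $\bar\Domain$. It remains to prove $Q < 0$, and this is the step I expect to be the main obstacle. On $\bar\Domain$ one checks easily that $\nu x - y > 0$ (both $-2\xi^2 x$ and $-y$ are positive) and that $2(1-\nu)a_4 x^2 y^2 \ge 0$, so $Q < 0$ forces $a_1 - a_3 x y = a_4\bigl((x_1+x_2+x_3)xy - x_1 x_2 x_3\bigr)$ to be negative \emph{and} to dominate the nonnegative square term. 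Unlike the first factor, the sign of $Q$ is not settled by monotonicity, since $a_1 - a_3 xy$ mixes a positive constant part with a negative bilinear part. My plan here is to substitute \eqref{eq:nuxxi} and normalize $x_2 = -1$ (admissible, as the metric depends only on the ratios of the roots), reducing $Q$ to an explicit polynomial in $(x,y)$ with coefficients rational in $\xi$, and to verify $Q < 0$ on the rectangle $\bar\Domain$ for $\xi \in (1/2,1/\sqrt2)$; in practice I would reduce this to the boundary rods together with an interior critical-point analysis, or carry it out with computer algebra as elsewhere in the paper.

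An alternative route, which I find cleaner and which uses the structure already assembled, is to propagate positivity via the flow identity \eqref{eq:Hflow}. On $\Domain$ we have $F > 0$, $1-\nu > 0$ and, by the first paragraph, $\nu x + y < 0$, so the inhomogeneous term $-2(1-\nu)F(\nu x + y)/(x-y)^2$ is \emph{strictly positive}; thus \eqref{eq:Hflow} reads $(x\partial_x - y\partial_y)H = A\,H + S$ with $S > 0$, where $A$ denotes the coefficient of $H$. The characteristics of $x\partial_x - y\partial_y$ are the hyperbolae $xy = \mathrm{const}$, along which $\dot x = x$, $\dot y = -y$; since $\tfrac{d}{dt}\bigl(e^{-\int A}H\bigr) = e^{-\int A} S > 0$, positivity of $H$ propagates in the direction of increasing $t$ (toward the left and top edges $x = x_2$, $y = x_2$) once it holds at an entry point. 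Tracing any characteristic backward from an interior point, it must exit $\Domain$ through the bottom rod $y = x_1$ or the right rod $x = x_3$ (there $x$ increases and $y$ decreases). It therefore suffices to establish $H > 0$ on these two rods, reducing the crux to one-variable inequalities, after which the flow together with continuity fills all of $\bar\Domain$. The remaining obstacle is still the boundary positivity on those rods, but it is now genuinely one-dimensional.
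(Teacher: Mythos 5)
Your second route is precisely the strategy of the paper: there, \eqref{eq:Hflow} is written as $(L-A)H=B$ with $L=x\partial_x-y\partial_y$, an integrating factor $\hat A$ with $L\hat A=A$ is introduced, $B>0$ in $\Domain$ is observed (since $F>0$ and $\nu x+y<0$), and $e^{-\hat A}H$ is concluded to be increasing along the flow lines, which enter $\Domain$ through $\Rod_2\cup \Nut_2\cup\Rod_3$. So your reduction to the two entry rods is correct and coincides with the paper's. Your first route (factoring $H=(\nu x+y)Q$ using $a_0=0$, and the elementary bound $\nu x+y\le(\nu+1)x_2<0$) is sound as far as it goes, but it too is left unfinished at the analogous point.

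The genuine gap is that in both routes you stop exactly where the real work begins. Positivity of $H$ on $\Rod_2$ and $\Rod_3$ (equivalently $Q<0$ there) is not a routine one-variable check; it occupies the bulk of the paper's proof. On $\Rod_2$ the paper shows that $\tilde H=-H/(\nu x+y)$ is concave in $x$, so positivity follows from explicit evaluation at the endpoints $\Nut_1,\Nut_2$. On $\Rod_3$ the second derivative $\partial_y^2\hat H_3=2-2x_1+(2-4\nu)x_3$ changes sign within the admissible parameter range, forcing a case split: in the convex case one must additionally prove $\partial_y\hat H_3(-1)<0$, which after substituting \eqref{eq:nuxxi} reduces to the positivity of a degree-six polynomial $P(\xi)$ on $(1/2,1/\sqrt2)$, itself established by a chain of derivative estimates ($\partial_\xi^4P<0$, hence $\partial_\xi^3P<0$, hence $\partial_\xi P$ concave and positive at the endpoints, hence $P$ increasing with $P(1/2)=3/8>0$). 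None of this, nor any substitute for it, appears in your proposal; ``verify on the rectangle \dots or carry it out with computer algebra'' and ``the remaining obstacle \dots is now genuinely one-dimensional'' defer the entire content of the lemma. As written, you have a correct reduction, not a proof.
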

\begin{proof}
Let
\begin{align} 
L ={}& x\partial_x - y\partial_y , \\ 
A ={}& \frac{2 (\nu x^2 + y^2)}{(x -  y) (\nu x + y)}, \\ 
B={}& - \frac{2 (1 -  \nu) F (\nu x + y)}{(x -  y)^2}.
\end{align} 
Since $y < x$ in $\Domain$, the flow of $L$ is transverse to rods $\Rod_2, \Rod_3$, and the flow lines starting on $ \Rod_2 \cup n_2 \cup \Rod_3 $ foliate $\Domain$. 
We can write equation \eqref{eq:Hflow} in the form 
\begin{align} \label{eq:HflowAB} 
(L - A) H = B.
\end{align} 
This can be solved by introducing an integrating factor. Let $(x_0, y_0)$ be given and let $\hat A$ be the solution to 
\begin{align} 
L \hat A ={}& A, \quad \hat A(x_0, y_0) = 0.
\end{align}
Then can write \eqref{eq:HflowAB} in the form 
\begin{align} 
L(e^{-\hat A} H) = e^{-\hat A} B.
\end{align} 
We find that $e^{-\hat A} H$ is monotone increasing along the flow of $L$, and hence since $e^{-\hat A}> 0$ by construction, we have that $H > 0$ in $\bar\Domain$, provided that $H > 0$ on $\Rod_2$ and $\Rod_3$.  

For $\xi$ satisfying \eqref{eq:xicond}, we have that 
\begin{align}\label{eq:nucond} 
-1 < \nu < -1/2.
\end{align} 
We note that $\nu x + y < 0$ in $\Domain$. Hence, in view of \eqref{CTMetricFunctions:H}, it is sufficient to study 
\begin{align} 
\tilde H = - H/(\nu x + y) 
\end{align} 
on $\Rod_2$ and $\Rod_3$. 
 
\subsubsection*{The case $\Rod_2$}
Setting $y=x_1$, $x_2= -1$ yields $\tilde H_2(x) := \tilde H(x,x_1)$. We have that $\tilde H_2$ is concave, since 
\begin{align} 
\partial_x^2 \tilde H_2 =  2 x_{1}{} \bigl(-2 x_{1}{} + \nu (1 + x_{1}{} -  x_{3}{})\bigr) < 0.
\end{align} 
Hence, in order to show that $\tilde H_2 > 0$ on $\Rod_2$, it is sufficient to show that $\tilde H(\Nut_1) = \tilde H_2(-1) >  0$, and $\tilde H(\Nut_2) = \tilde H_2(x_3)>0$. We have  
\begin{align} 
\tilde H_2(-1) = (\nu -  x_{1}{}) x_{1}{} (1 + x_{1}{})
\end{align} 
and since $x_1 < -1$, we have $\tilde H_2(-1) > 0$. Next we consider 
\begin{align} 
\tilde H_2(x_3) = x_{1}{} (x_{1}{} -  x_{3}{}) x_{3}{} (x_{1}{} + \nu x_{3}{})
\end{align} 
and using \eqref{eq:xiordering} and \eqref{eq:nucond} we find $\tilde H_2(x_3) > 0$. Thus 
\begin{align} 
H \big{|}_{\Rod_2} > 0
\end{align} 
\subsubsection*{The case $\Rod_3$}
Setting $x=x_3$, $x_2=-1$ yields $\tilde H_3(y):=\tilde H(x_3,y)$. We have $\tilde H(\Nut_2) = \tilde H_3(x_1) > 0$ by the discussion for $\Rod_2$. Further, we have 
\begin{align} 
\tilde H(\Nut_3) = \tilde H_3(-1) = x_{3}{} (1 + x_{3}{}) (-1 + \nu x_{3}{}) > 0
\end{align} 
in view of \eqref{eq:xiordering} and \eqref{eq:nucond}. We note that $\tilde H_3$ contains a factor $x_3$ and therefore it is sufficient to consider 
\begin{align} 
\hat H_3(y) = \tilde H_3(y)/(-x_3) = -2 (-1 + \nu) x_{3}{} y^2 + (\nu x_{3}{} -  y) \bigl(x_{1}{} + (-1 + x_{1}{} + x_{3}{}) y\bigr)
\end{align} 
We have 
\begin{align} 
\partial_y^2 \hat H_3 =2 - 2 x_{1}{} + (2 - 4 \nu) x_{3}{}
\end{align} 
We have $\hat H_3$ is concave provided $x_1 > 1 + x_{3}{} - 2 \nu x_{3}{}$, in which case it follows from the fact that $\tilde H(\Nut_2) > 0$, $\tilde H(\Nut_3) > 0$ that $\tilde H > 0$ on $\Rod_3$. 

For the case $x_1 < 1 + x_{3}{} - 2 \nu x_{3}{}$, corresponding approximately to $\xi \in (1/2,0.625)$, $\hat H_3$ is convex. Hence it is sufficient to show that $\partial_y \hat H_3(-1) < 0$, i.e. that $\hat H_3$ is decreasing at $\Nut_3$. We have 
\begin{align} 
\partial_y \hat H_3(-1) ={}& x_{1}{} + \nu x_{1}{} x_{3}{} - 2 (1 + x_{3}{}) + \nu x_{3}{} (3 + x_{3}{}) \\
\intertext{use \eqref{eq:nuxxi} with $x_2=-1$}
={}& - \frac{3 - 14 \xi + 20 \xi^2 - 16 \xi^4 + 24 \xi^5 - 64 \xi^6 + 64 \
\xi^7 - 16 \xi^8}{8 (-1 + \xi)^2 \xi^2 (-1 + 2 \xi)} \label{eq:dyH3}
\end{align} 
The denominator of \eqref{eq:dyH3} is positive, and the numerator can be factorized as 
\begin{align} 
(-1 + 2 \xi^2)P(\xi)
\end{align} 
with 
\begin{align} 
 P(\xi) = 3 - 14 \xi + 26 \xi^2 - 28 \xi^3 + 36 \xi^4 - 32 \xi^5 + 8 \xi^6
\end{align} 
We consider $P(\xi)$ for $\xi \in \xiInt = (1/2,1/\sqrt{2})$. 
We have that $(-1+2\xi^2) < 0$ for $\xi \in \xiInt$. Hence, $\partial_y \hat H_3(-1) < 0$ if and only if $P>0$. 
We have 
\begin{align} 
\partial_\xi^3 P ={}& -168 + 864 \xi - 1920 \xi^2 + 960 \xi^3, \\
\partial_\xi^4 P ={}& 864 - 3840 \xi + 2880 \xi^2
\end{align} 
The quadratic polynomial  $\partial_\xi^4 P$ has two simple roots in the complement of  $\xiInt$ and has a positive second order coefficient. Hence $\partial_\xi^4 P < 0$ in $\xiInt$. We have 
$\partial_\xi^3 P(1/2) = -96< 0$, and thus $\partial_\xi^3 P < 0$ in $\xiInt$. It follows that $\partial_\xi P$ is concave in $\xiInt$. Now, a calculation shows that $\partial_\xi P$ is positive at the endpoints of $\xiInt$. It follows that $\partial_\xi P > 0$ so that $P$ is monotone increasing on $\xiInt$. Finally, $P(1/2)  = 3/8 > 0$, and hence $P> 0$ on $\xiInt$. 

In view of the above discussion, this proves that $\partial_y \hat H_3(-1) < 0$, and hence that $\tilde H > 0$ and also $H > 0$ on $\Rod_3$. In view of the monotonicity of $e^{-\hat A} H$ along the flow of $L$, and the fact that the flow-out of $\Rod_2 \cup n_2 \cup \Rod_3$ is $\Domain$ together with positivity of $H$ at the turning points, this shows that $H > 0$ in $\bar \Domain$. 
\end{proof} 

\begin{remark}
On the boundary $\partial\Omega$ the Weyl tensor is of type $D^+ D^-$ and by \eqref{eq:PsiValues} and \eqref{eq:Psi0tildeRod} the Kretschmann scalar \eqref{eq:Kretschmann} takes the simplified form
\begin{align}
\vert W\vert^2 ={}& 96 (\Psi_{2}^2 + \tilde{\Psi}_{2}^2).
\end{align}
From the explicit form of the curvature scalars in \eqref{eq:PsiValues}, \eqref{eq:Psi2tilde} and Lemma~\ref{lem:Hg0} it follows that $\vert W\vert^2$ is finite and non-vanishing on $\bar \Domain\setminus\{(x_2,x_2)\}$.
\end{remark}

\subsection{Asymptotics} \label{sec:asymptotics}
Provided the rod structure is regular as discussed in Section \ref{sec:prel}, the smooth space $\cM$ is constructed by performing an identification of the Killing coordinates according to \eqref{eq:tauphi-identify}. We emphasize that AF spaces with parameter $\Omega \neq 0$ are not asymptotically Schwarzschild due to the fact the the fundamental domain of the Killing coordinates differs.

In order to put the Chen-Teo instanton in AF form, we introduce new coordinates $(\tilde\tau, \tilde \phi)$ according to 
\begin{align}\label{eq:tphittau}
\phi = \ell_0^\phi \tilde \phi, \qquad
\tau = \ell_0^\tau \tilde \phi+ \sqrt{1-\nu^2} \tilde \tau,
\end{align}
so that $\partial_{\tilde \phi} = \ell_0$ and $\partial_{\tilde \tau}$ has unit norm at infinity.
Following \cite[Eq. (3.7)]{2015PhRvD..91l4005C}, let  
\begin{align} 
x = x_2 - \frac{x_2 \sqrt{k(1-\nu^2)}}{r} \cos^2 \frac{\theta}{2}, \quad y =  x_2 + \frac{x_2\sqrt{k(1-\nu^2)}}{r} \sin^2\frac{\theta}{2} .
\end{align} 
Passing to spheroidal coordinates $(\tilde\tau, r,\theta, \tilde \phi)$ we have 
\begin{align} 
g_{ab} \d x^a \d x^b = \d\tilde\tau^2 + dr^2 + r^2 (\d\theta^2 + \sin^2\theta \d\tilde\phi^2) + O(1/r).
\end{align} 
This shows that the Chen-Teo instanton is AF in the sense of Definition \ref{def:AF} with parameters
\begin{align} \label{eq:param}
\kappa = \ell_0^\phi/\ell_1^\phi , \quad \Omega/\kappa  = \frac{(\ell_1^\tau \ell_0^\phi-\ell_0^\tau\ell_1^\phi)}{\sqrt{1-\nu^2}\ell_0^\phi}.
\end{align} 
See also \cite{BAIRD2021104310} for a discussion using a different parametrization of the instanton.

\subsection{Proof of Theorem \ref{thm:main-intro}} \label{sec:ProofMainThm} 

We have shown in Section \ref{sec:regularity} that the 2-parameter family of Ricci flat metrics given by 
\eqref{eq:ChenTeo5paramMetric} restricted to the parametrization \eqref{eq:nuxxi} define, upon suitable identifications of the Killing coordinates as discussed in Section \ref{sec:rods}, a smooth Ricci flat four-manifold. In view of the discussion in Section \ref{sec:asymptotics} the Chen-Teo instanton is complete and AF, with parameters given by \eqref{eq:param}. Recall that $\nu \neq \pm 1$ for the Chen-Teo instanton, cf. \eqref{eq:nucond}. Taking this into account, it follows from Theorem \ref{thm:AlgNonSpec} the Chen-Teo instanton has algebraically special, but non-vanishing self dual Weyl curvature, and that its anti-self dual Weyl curvature is algebraically general.  In particular, the Chen-Teo instanton is Hermitian and non-K\"ahler, cf. Corollary \ref{cor:details}.

\subsection*{Acknowledgements}
L.A. thanks thanks Shing-Tung Yau for an introduction to gravitational instantons, and many discussions on the subject. We are grateful for the suggestion by Paul Tod, to use a tetrad as in Definition~\ref{def:KillingTetrad}. We also thank Olivier Biquard, Mattias Dahl, Gustav Nilsson, and Walter Simon for many enlightning discussions about instantons.  The main result was established while the authors were in residence at Institut Mittag-Leffler in Djursholm, Sweden during the fall of 2019, supported by the Swedish Research Council under grant no. 2016-06596.

%%%%%%%%%%%%%%%%%%%%%%%%%%%%%%%%
%\newcommand{\mnras}{Monthly Notices of the Royal Astronomical Society }

\newcommand{\arxivref}[1]{\href{http://www.arxiv.org/abs/#1}{{arXiv:#1}}}

\newcommand{\prd}{Phys. Rev. D} 
\bibliographystyle{abbrv}
%\bibliography{instantons}

\def\cprime{$'$}

\bigskip
\end{document}